\theoremstyle{plain}                    %stile corsivo
\newtheorem{theorem}{Theorem}[section]      %definizione ambiente teorema
\newtheorem{proposition}[theorem]{Proposition}    %definizione ambiente proposizione
\newtheorem{lemma}[theorem]{Lemma}            %definizione ambiente Lemma
\theoremstyle{definition}               %stile roman
\newtheorem{definition}[theorem]{Definition}        %definizione ambiente definizione
\newtheorem{example}[theorem]{Example}      %definizione ambiente esempio
\theoremstyle{remark}                   %stile per osservazioni
\newtheorem{remark}[theorem]{Remark}           %definizione ambiente osservazione
\def\cE{\mathcal{E}}
\def\cF{\mathcal{F}}
\def\cD{\mathcal{D}}
\def\Res{\mathop{Res}}
\def\d{\partial}
\def\CP{\mathbb{C}\mathrm{P}}
\def\p{\partial}
\def\normordboson{ \mathop{\genfrac{}{}{0pt}3{*}{*} }}
\def\normordfermion{ \mathop{\genfrac{}{}{0pt}3{\bullet}{\bullet} }}
\def\pv{\mathop{\mathsf{egv}}\nolimits}
\def\Tr{\mathop{\mathrm{Tr}}}
\def\Res{\mathop{\mathrm{Res}}}
\def\N{\mathbb{N}} 
\def\Q{\mathbb{Q}}
\def\Z{\mathbb{Z}}
\def\J{\mathcal{J}}
\newcommand{\cor}[1]{\langle 0 |\, #1 \, | 0 \rangle}
\renewcommand{\beta}{\hbar}
\begin{document}
\title[Hurwitz theory, KP integrability and quantum curves]{Ramifications of Hurwitz theory, KP integrability and quantum curves}
\author{A.~Alexandrov}

\address{A.~A.: Centre de recherches math\'ematiques, Universit\'e de Montr\'eal, P.~O.~Box 6128, Centre-ville Station, Montr\'eal, Qu\'ebec H3C 3J7,
	Canada; Department of Mathematics and Statistics,
Concordia University,
1455 de Maisonneuve Blvd. W.
Montreal, Quebec H3G 1M8, Canada; and Institute for Theoretical and Experimental Physics, 25 Bolshaya Cheryomushkinskaya
Ulitsa, Moscow 117218, Russia}
\email{alexandrovsash@gmail.com}

\author{D.~Lewanski}

\address{D.~L.: Korteweg-de Vries Institute for Mathematics, University of Amsterdam, Postbus 94248, 1090 GE Amsterdam, The Netherlands}
\email{D.Lewanski@uva.nl}

\author{S.~Shadrin}

\address{S.~S.: Korteweg-de Vries Institute for Mathematics, University of Amsterdam, Postbus 94248, 1090 GE Amsterdam, The Netherlands}
\email{S.Shadrin@uva.nl}

\vspace{-10.0cm}

\begin{center}
\hfill ITEP/TH-19/15
\end{center}

\vspace{1.0cm}

\begin{abstract}
	In this paper we revisit several recent results on monotone and strictly monotone Hurwitz numbers, providing new proofs. In particular, we use various versions of these numbers to discuss methods of derivation of quantum spectral curves from the point of view of KP integrability and derive new examples of quantum curves for the families of double Hurwitz numbers. 
\end{abstract}

\maketitle

\tableofcontents

\section{Introduction} 

\subsection{Hurwitz numbers}
The purpose of this paper is to survey some variations of the concept of Hurwitz numbers and their generating functions. Recall that a simple Hurwitz number $h_{g,\mu}$ depends on a genus $g\geq 0$ and a partition $\mu\vdash d$ of length $\ell=\ell(\mu)$, $\mu=(\mu_1\geq \cdots \geq \mu_\ell)$, $\sum_{i=1}^\ell \mu_i = d$. By definition, $h_{g,\mu}$ is the weighted number of ramified coverings of a sphere $\CP^1$ by a genus $g$ surface, whose degree is $d$, whose monodromy near $\infty\in\CP^1$ is a permutation of a cyclic type $\mu$, and these coverings must have simple ramification points over fixed $2g-2+n+d$ points in $\CP^1\setminus \{\infty\}$.   

These numbers satisfy plenty of interesting properties, and for our paper the most important ones are
\begin{itemize}
\item[$\bullet$] The generating function of Hurwitz numbers 
\begin{equation}
Z(\mathbf{p},\beta):=\exp\left( \sum_{g,\mu} h_{g,\mu} p_\mu \frac{\beta^{2g-2+\ell(\mu)+|\mu|}}{(2g-2+\ell(\mu)+|\mu|)!}\right)
\end{equation}
is a tau-function of the KP hierarchy~\cite{Okounkov2000,Kazarian2009}.
\item[$\bullet$] The principal specialization $\Psi(x,\hbar)$ of the generating function satisfies a differential equation
\begin{equation}
\left(\hat y - \hat x e^{\hat y}\right) \Psi(x,\hbar)=0, \qquad \hat x = x\cdot, \hat y = \hbar x\frac{\p}{\p x},
\end{equation}
 called quantum curve~\cite{Zhou2012}. Here by principle specialization we call the substitution $p_\mu=(x/\hbar)^{|\mu|}$ in the formula for $Z$ above.
\item[$\bullet$] Hurwitz numbers for fixed  $g\geq 0$ and $\ell\geq 1$ can be arranged into the so-called $\ell$-point functions, whose differentials satisfy the topological recursion in the sense of Chekhov--Eynard--Orantin for the spectral curve
\begin{equation}
\log x=\log y-y,
\end{equation}
see~\cite{EynardMulaseSafnuk2011}.
\item[$\bullet$] There is a formula for $h_{g,\mu}$ in terms of the intersection numbers on the moduli space of curves $\overline{\mathcal{M}}_{g,\ell}$:
\begin{equation}
h_{g.\mu}=
 \frac{(2g-2+\ell+d)!}{|\mathrm{Aut}(\mu)|} 
 \prod_{i=1}^{\ell(\mu)} \frac{\mu_i^{\mu_i}}{\mu_i !} 
 \int_{\overline{\mathcal{M}}_{g,\ell}}
 \frac{1-\lambda_1+\cdots\pm\lambda_g}{\prod_{i=1}^{\ell(\mu)} (1-\mu_i\psi_i)}
\end{equation}
(the ELSV formula)~\cite{ELSV,DuninKazarian2015}.
\end{itemize}

These results are related to each other, and it is interesting to specify a class of combinatorial problems depending in a natural way on a genus parameter $g\geq 0$ and a partition $\mu$, where the same sequence of results can be derived. Let us explain why we find this sequence of results important.

\subsection{Outline of the logic} \label{sec:outlineoflogic}
Let us assume that we start with a combinatorial problem depending on a parameter $g\geq 0$ and a partition $\mu$, and its generating function appears to be a KP tau-function. Then we have the following:

\subsubsection*{Step 1: From KP to quantum curve}
In the case when the generating function of some problem of enumerative geometry can be identified with a KP tau-function, the integrable hierarchy often allows us to find a quantum spectral curve. 
Indeed, the principal specialization of the generating function coincides with the so-called first basis vector of the corresponding point of the Sato Grassmannian, and, as it was observed in \cite{AlexandrovTauFunctions2015} (see also~\cite{Schwarz2015,Zhou2015,JianZhou2015-2}), this reduces the problem to a specialization of a suitable Kac--Schwarz operator that would annihilate it.

\subsubsection*{Step 2: From quantum curve to topological recursion} Once we have a quantum curve, we can formulate a precise conjecture that the differentials of the $\ell$-point functions satisfy the topological recursion~\cite{EynardOrantin2007} for the spectral curve obtained by the dequantization of the quantum curve. This relation was made explicit in~\cite{GukovSulkowski2014}. Note that the spectral curve should also correspond to the $(g,\ell)=(0,1)$ part of  the problem~\cite{DumitrescuMulaseSafnukSorkin2013}, and this property is automatically implied by the quantum curve, see e.g.~\cite{MulaseShadrinSpitz2013,DuninBarkowskiMulaseNoburyPopolitovShadrin2013}.  

\subsubsection*{Step 3: From topological recursion to intersection numbers} Once we have spectral curve topological recursion, we can immediately conclude that the corresponding combinatorial problem can be solved in terms of some intersection numbers on the moduli space of curves that represent the correlators of a semi-simple cohomological field theory with a possibly non-flat unit~\cite{EynardIntersections1,EynardInvariantsModuli2,DOSS}, and therefore, have expressions in terms of the Givental graphs~\cite{DSS,DOSS}.

\subsubsection*{Discussion of Steps 1-3} The most important point of this sequence of steps is that Step 1 provides us with a conjectural spectral curve for Step 2 and, therefore, with a conjectural intersection number formula in Step 3. Thus, analysis of the principal specialization in the framework of the KP integrability appears to be a powerful tool that provides very precise conjectural links between combinatorial problems and the intersection theory of the moduli space of curves. 

This logic allows one to prove the ELSV-type formulas in some cases, for instance, this way the original ELSV is proved in~\cite{DuninKazarian2015}, the Johnson--Pandharipande--Tseng formula for the orbifold Hurwitz numbers is proved in~\cite{DuninLewanski2015}, and the conjectural ELSV-type formula for the $r$-spin Hurwitz numbers is derived, in a new way, in~\cite{ShadrinSpitzZvonkine2015}. The corresponding quantum curves (that might be considered as the sources of all these formulas) are derived in~\cite{Zhou2012,MulaseShadrinSpitz2013}. In all these examples, however, the ELSV-type formulas were known before, without any relation to spectral/quantum curves. 

\subsection{Results in this paper}
Rather general models of Hurwitz type are known to be described by the KP/Toda tau-functions \cite{AMMN2012,OP02,HarnadOrlov2015}, thus, the logic that we outline above can be applied to them.
In this paper we focus on the first step for a number of Hurwitz-type theories based on the symmetric functions of the Jucys-Murphy elements in the group algebra of the symmetric group. These theories were considered recently in connection to enumeration of dessins d'enfants~\cite{AlexandrovMironovMorozovNatanzon2014}, expansion of hypegeometric tau-functions~\cite{HarnadOrlov2015}, study of the HCIZ matrix model~\cite{GouldenGuayNovak2014}, and topological recursion~\cite{DoDyerMathews2014,DoKarev2015}.

We revisit with new proofs a number of results in~\cite{AlexandrovMironovMorozovNatanzon2014,HarnadOrlov2015,DoKarev2015}, namely, 
\begin{itemize}
\item[--] we establish relations between various geometric interpretations for these Hurwitz-type theories;
\item[--] we provide the group operators that generate the corresponding tau-functions;
\item[--] we derive the quantum curves from the Kac--Schwarz operators.
\end{itemize}

Once we have a quantum curve, we can immediately produce an ELSV-type formula. We give a detailed computation for the monotone Hurwitz numbers --- this answers a question posed in~\cite{GouldenGuayNovak2014}, and, in fact, it is not a conjecture but a theorem since the corresponding Step 2 (a proof of the topological recursion) was derived in~\cite{DoDyerMathews2014}.

The description of the 2D Toda tau-function for the double monotone Hurwitz numbers in terms of the KP Kac--Schwarz operators allows us to construct the quantum spectral curve for this case. The second set of the Toda times plays the role of linear parameters of the corresponding operator. We use this Kac--Schwarz description in order to derive a system of linear differential operators that annihilate the tau-function for the double monotone Hurwitz numbers and uniquely characterize it. 

In addition, we derive a number of new quantum curves for similar Hurwitz theories. In particular, this yields an interesting example for which we can say in advance that the logic outlined above does not apply. Namely, we have an example where the dequantization of the quantum curve doesn't give a spectral curve suitable for the corresponding topological recursion. 

\subsection{Organization of the paper} In Section~\ref{sec:KPhierarchy} we briefly recall the necessary facts from the theory of the KP hierarchy. In Section~\ref{sec:Jucys} we recall the necessary facts from the Jucys theory. In Section~\ref{sec:HurwitzTheory} we define a variety of Hurwitz-type problems that we study in this paper, and explain the correspondences between them that follow from the Jucys correspondence.  In Section~\ref{sec:OperatorsForBlocks} we embed these Hurwitz-type problems in the framework of the KP formalism. Section~\ref{sec:MonotoneHurwitz} is devoted to the study of the monotone Hurwitz numbers. We derive in a new way a quantum curve for them, compute the associated ELSV-type formula, and provide the linear constrains for the tau-function of the double monotone Hurwitz numbers. Finally, in Section~\ref{sec:FurtherExamples} we derive quantum curves for some further examples that are interesting from various points of view (in particular, the one whose classical limit does not give a proper spectral curve). 

\subsection{Acknowledgments} We thank A.~Popolitov for fruitful discussions and careful reading of the first draft of this paper. We also thank A. Orlov for an interesting discussion and the referee for helpful suggestions.  

A.~A. supported by
the Natural Sciences and Engineering Research Council of Canada (NSERC), by the Fonds derecherche du Qu\'ebec Ð Nature et technologies (FRQNT), by RFBR grant 15-01-04217 and joint RFBR grant 15-52-50041-YaF. A.~A. acknowledges
hospitality of KdV Institute for Mathematics, IBS Center for Geometry and Physics, and IHES.

D.~L. and S.~S. we supported by the Netherlands Organization for Scientific Research.

%\newpage

\section{KP hierarchy and Kac-Schwarz operators}\label{S1} \label{sec:KPhierarchy}
In this section we give a brief recollection of some of the basic concepts of KP integrability used in this paper. For more details see, e.g., \cite{SatoSato1983,SegalWilson1985,FukumaKawaiNakayama1992,MiwaJimboDateBook2000,AlexandrovTauFunctions2015} and references therein. 

The KP hierarchy can be described by the bilinear identity satisfied by the tau-function $\tau({\bf t})$, namely  
\begin{equation}\label{bi1}
\oint_{{\infty}} e^{\xi ({\bf t}-{\bf t'},z)}
\,\tau ({\bf t}-[z^{-1}])\,\tau ({\bf t'}+[z^{-1}])dz =0,
\end{equation}
where $\xi({\bf t},z)=\sum_{k=1}^\infty t_k z^{k}$ and we use the standard notation
\begin{equation}\label{shiftedt}
{\bf t} \pm \left[z^{-1}\right]=\left\{t_1\pm\frac{1}{z},t_2\pm\frac{1}{2z^2},t_3\pm\frac{1}{3z^3},\dots\right\}.
\end{equation}

In Hurwitz-type problems it is often convenient to work in the coordinates $\mathbf{p}$ instead of $\mathbf{t}$, where $p_k=kt_k$, $k=1,2,\dots$.

\subsection{Semi-infinite wedge space}
We consider the vector space $V:=\bigoplus_{c\in\Z} V_c$ spanned by the vectors that are obtained from
$$
|0\rangle := z^0 \wedge z^{-1} \wedge z^{-2} \wedge \cdots 
$$
by applying a finite number of the operators $\psi_i:=z^i \wedge$ and $\psi_i^*:= \frac{\d}{\d (z^i)}$, $i\in\mathbb{Z}$. The gradation $c$ is introduced as follows: 
$$
|0\rangle\in V_0,\ \deg\psi_i = 1,\ \deg \psi_i^*=-1,\ i\in\Z. 
$$
In particular, the vector space $V_0$ has a basis that consists of the vectors 
$$
v_\lambda := z^{\lambda_1-0}\wedge z^{\lambda_2-1} \wedge z^{\lambda_3-2} \wedge \cdots,
$$
where $\lambda=(\lambda_1\geq \lambda_2\geq\cdots\geq \lambda_{\ell(\lambda)}\geq 0\geq 0\geq \cdots)$ is a Young diagram. Note that $|0\rangle = v_\emptyset$. 

We define the operator $\normordfermion\psi_i\psi_j^*\normordfermion$ on $V_0$ to be $\psi_i\psi_j^*$ if $j>0$ and $-\psi_j^*\psi_i$ if $j\leq 0$. The map $E_{ij} \to \normordfermion\psi_i\psi_j^*\normordfermion$ gives a projective representation of $\mathfrak{gl}_\infty$ in $V_0$. 

Consider the operators $\alpha_n:=\sum_{i\in \Z} \normordfermion\psi_{i-n}\psi_i^*\normordfermion$ defined on $V_0$. Note that $\alpha_0 V_0 = 0$.

There is a map from $V_0$ to $\mathbb{C}[[\mathbf{t}]]$ given by 
\begin{equation}\label{eq:correspondenceKP}
V_0\ni v \mapsto \langle 0 | \exp(\sum_{i=1}^\infty t_i \alpha_i) v,
\end{equation}
where $ \langle 0 |$ is the covacuum, that is, the covector that returns the coefficient of  $|0\rangle$. For instance, the function that corresponds to $v_\lambda$ is the Schur function $s_\lambda(\mathbf{t})$. 

The description of the tau-functions of the KP hierarchy in this language is the following: the tau-functions correspond to the vectors that belong to the image of the Pl\"ucker embedding of the semi-infinite Grassmannian, also called Sato Grassmannian. On the open cell this means that we are looking for the vectors representable as 
$$\Phi_1\wedge \Phi_2\wedge \Phi_3 \wedge \cdots,$$
where $\Phi_k (z)=z^{1-k}+\sum_{m=2-k}^{\infty}\Phi_{km} z^{m}$, $\Phi_{km}\in\mathbb{C}$,  are known as basis vectors. This description immediately implies that the group $GL(V_0)$ is the group of symmetries of the KP hierarchy. 

The map~\eqref{eq:correspondenceKP} allows to translate the infinitesimal symmetries of the semi-infinite Grassmannian in $\widehat{\mathfrak{gl}}_\infty$ into differential operators that act as infinitesimal symmetries of the KP hierarchy. 

There are several examples that are important in this paper. First of all, we have:
$$
\alpha_n \leftrightarrow \widehat J_n := \frac{\d}{\d t_n}, n>0; \qquad
\alpha_n \leftrightarrow \widehat J_n := - n t_{-n}, n<0,
$$
where the operators $\widehat J_n$ are defined on $\mathbb{C}[[\mathbf{t}]]$.
The energy operator $E\colon V_0\to V_0$ defined as $E\colon v_\lambda\mapsto |\lambda|v_\lambda$ corresponds to the operator $\widehat L_0\colon \mathbb{C}[[\mathbf{t}]]\to \mathbb{C}[[\mathbf{t}]]$ defined as
$$
\widehat L_0 := \frac{1}{2}\sum_{i+j=0} \normordboson\widehat J_i \widehat J_j \normordboson,
$$
where the normal ordering denoted by $\normordboson\cdots \normordboson$ put all operators $\widehat{J}_k$ with positive $k$ to the right of all $\widehat{J}_k$ with negative $k$. 
The Casimir operator $\tilde \cE_0 (z)\colon V_0\to V_0$ acts as follows:
\begin{equation}
\tilde\cE_0(u) v_\lambda = \sum_{r=0}^\infty \frac{u^r}{r!} \sum_{i=1}^{\ell(\lambda)} \left[ (\lambda_i-i+\frac 12)^r - (-i+\frac 12)^r \right] v_\lambda.
\end{equation}
Using the auxiliary functions $\zeta(u)=e^{u/2}-e^{-u/2}$, we can present the corresponding differential operator on $\mathbb{C}[[\mathbf{t}]]$ as
\begin{equation}\label{gfCasim}
\frac{1}{\zeta(u)} \sum_{n=1}\frac{1}{n!} \sum_{\substack{\vec{k} \in (\Z^\times)^n \\ k_1+\cdots+k_n=0}} \prod_{i=1}^{n} \frac{\zeta(k_iu)}{k_i} 
\normordboson \widehat J_{k_1}\cdots \widehat J_{k_n} \normordboson
\end{equation}
(see~\cite{Rossi2008,Alexandrov2011,ShadrinSpitzZvonkine2012}). 

\subsection{Kac-Schwarz operators}

A convenient way to describe infinitesimal symmetries of the KP hierarchy is to work with the operators from the algebra $w_{1+\infty}$ (the algebra of differential operators in one variable that describes infinitesimal diffeomorphisms of the circle) acting on the basis vectors $\Phi_i$, $i=1,2,\dots$. 

Let us denote by $\left<\Phi\right>$ the point of the Sato Grassmannian, defined by the set of the basis vectos $\left<\Phi_1,\Phi_2,\Phi_3,\dots\right>$. We call an operator $a \in w_{1+\infty}$ the Kac--Schwarz (KS) operator for the tau-function $\tau$ if for the corresponding point of the Sato Grassmannian we have the stability condition
\begin{equation}\label{KScond}
a\left<\Phi\right> \subset \left<\Phi\right>.
\end{equation}

For the trivial tau-function $\tau_\emptyset:=1$ with the basis vectors
$\Phi^\emptyset_k=x^{1-k}$, $k\geq 1$,
we have two obvious KS operators
\begin{align}\label{trivKS}
a_\emptyset & :=-x\frac{\d}{\d x},
\\ \notag
b_\emptyset & :=x^{-1}.
\end{align}
These operators satisfy the commutation relation
\begin{equation}\label{trivcomr}
\left[a_\emptyset,b_\emptyset\right]=b_\emptyset.
\end{equation}
The KS operators (\ref{trivKS}) act on the basis vectors as follows:
\begin{align}\label{trivac}
a_{\emptyset}\,\Phi_k^{\emptyset}(x) & =(k-1)\Phi_k^{\emptyset}(x),\\ \notag
b_{\emptyset}\,\Phi_k^{\emptyset}(x) &= \Phi_{k+1}^{\emptyset}(x).
\end{align}

Consider the tau-function 
\begin{equation}\label{simp}
\tau_\bullet({\bf t}; {\bf \tilde{t}})=e^{\sum_{k=1}^\infty k t_k \tilde{t}_k}=\sum_{\lambda} s_\lambda({\bf t}) s_\lambda({\bf \tilde{t}}), 
\end{equation}
where $s_\lambda(t)$ are the Schur functions. From the point of view of the KP hierarchy this tau-function corresponds to the basis vectors
\begin{equation}\label{bulbv}
\Phi^\bullet_k(x)= e^{\sum_{j=1}^\infty\tilde{t}_j x^j} x^{1-k}
\end{equation}
and the KS operators can be obtained from \eqref{trivKS} by conjugation:
\begin{align}\label{KSbul}
a_\bullet & :=e^{\sum_{j=1}^\infty\tilde{t}_j x^j}a_\emptyset\,e^{-\sum_{j=1}^\infty\tilde{t}_j x^j}=\sum_{k=1}^\infty k \tilde{t}_k x^ k-x\frac{\d}{\d x},
\\ \notag
b_\bullet & :=e^{\sum_{j=1}^\infty\tilde{t}_j x^j}b_\emptyset\,e^{-\sum_{j=1}^\infty\tilde{t}_j x^j}=x^{-1}.
\end{align}
In this case the commutation relation and action of the KS operators on the basis vectors coincide with the ones given by Equations \eqref{trivcomr} and \eqref{trivac}.

Basis vectors for the points of the Sato Grassmannian, corresponding to the double Hurwitz numbers, can be obtained from (\ref{bulbv}) by an action of the operators which are formal series in $x\frac{\p}{\p x}$. Corresponding Kac-Schwarz operators (and, in particular, the quantum spectral curve operator) can be obtained from the operators (\ref{KSbul}) by a conjugation and also satisfy relations  \eqref{trivcomr} and \eqref{trivac}. 

\begin{remark}
Let us stress that the algebra of the Kac--Schwarz operators for the trivial tau-function is generated not by the operators $a_\emptyset$ and $b_\emptyset$, but instead by the operators $b_\emptyset^{-1} a_\emptyset$ and $b_\emptyset$. Of course, this is also true for the corresponding Kac--Schwarz operators for all tau-functions, which can be obtained from the trivial one by a simple conjugation, in particular for the generating functions of the Hurwitz numbers (see Remark \ref{caj}).
\end{remark}

% % % % % % % % % % % % % % % % % % %
% % % % % % % % % % % % % % % % % % % 
% % % % % % % % % % % % % % % % % % %
% % % % % % % % % % % % % % % % % % % 
% % % % % % % % % % % % % % % % % % %

%\newpage

\section{Symmetric polynomials of Jucys elements}

\label{sec:Jucys}

In this section we briefly recall some relations between different bases of the algebra of symmetric polynomials and the Jucys correspondence. 

\subsection{Symmetric polynomials}
We consider the elementary symmetric polynomials $\sigma_b$, the complete homogeneous polynomials $h_b$, and the power sums $p_b$:
\begin{align*}
\sigma_b(x_1, \dots, x_n) & := \sum_{1 \leq i_1 < \cdots < i_b \leq n} x_{i_1}\cdots x_{i_b}, \\
h_b(x_1, \dots, x_n) & := \sum_{1 \leq \lambda_1 \leq \dots \leq \lambda_b \leq n} \!\!\!\! x_{\lambda_1}\cdots x_{\lambda_b}, \\
p_b(x_1, \dots, x_n) & := \sum_{1\leq i \leq n} x_i^b.
\end{align*}
The polynomials $\sigma_b$ and $h_b$ have the following generating series:
\begin{align}\label{gen}
\prod_{i=1}^n (1 + x_i t) & = \sum_{b=0}^{\infty} \sigma_b(x_1, \dots, x_n) t^b,
\\ \notag
 \prod_{i=1}^n \frac{1}{(1 - x_i t)} & = \sum_{b=0}^{\infty} h_b(x_1, \dots, x_n) t^b.
\end{align}

The Newton identities describe relations between the power sums $p_b$ and bases $\sigma_b$ and $h_b$:
\begin{align}\label{hsigmacoeff}
\sigma_b & = [z^b]. \exp \left( - \sum_{i \geq 1 } \frac{p_i}{i} (-z)^i \right),
\\ \notag
h_b & = [z^b]. \exp \left(\sum_{i \geq 1 } \frac{p_i}{i} z^i \right).
\end{align}
%The Schur polynomials and the power sums are related by
%\\
%We also have\\ $m_{\lambda}$ and the others. For more details citare MacDonald. 
%We also have the following relations between $h$ and $\sigma$:
%	\begin{align}\label{hcomesigma}
%	h_s  & =\sum_{k=1}^s (-1)^{k+s} \sum_{\substack{\vec{\alpha} \in (\N^{\times})^k \\ | \alpha | = s}}  \sigma_{\alpha_1}\cdots\sigma_{\alpha_k} ;
% \\ \notag
%	\sigma_s  & =\sum_{k=1}^s (-1)^{k+s} \sum_{\substack{\vec{\alpha} \in (\N^{\times})^k \\ | \alpha | = s}}  h_{\alpha_1}\cdots h_{\alpha_k}. 
%	\end{align}
%This relation can be derived from the orthogonality relation
%	\begin{equation}\label{orto}
%	\sum_{k=0}^{s} (-1)^k \sigma_k h_{s-k} = 0,
%	\end{equation}
%which follows directly from the generating series given by Equation~\eqref{gen}.
We also have the following relations between $\sigma_b$ and $h_b$:
\begin{align}\label{hcomesigma}
h_b  & =\sum_{k=1}^b (-1)^{k+b} \sum_{\substack{\vec{\alpha} \in (\N^{\times})^k \\ | \alpha | = b}}  \sigma_{\alpha_1}\cdots\sigma_{\alpha_k},
\\ \notag 
\sigma_b  & =\sum_{k=1}^b (-1)^{k+b} \sum_{\substack{\vec{\alpha} \in (\N^{\times})^k \\ | \alpha | = b}}  h_{\alpha_1}\cdots h_{\alpha_k}.
\end{align}

\subsection{The Jucys correspondence} Let $\alpha \in \mathfrak{S}_n /\!\!\sim$ be a conjugacy class of the symmetric group $\mathfrak{S}_n$ or, equivalently, a partition of $n$. We denote the  number of cycles of $\alpha$ by $\ell(\alpha)$.
We denote the formal sum of all permutations with cycle type $\alpha$ as
$C_{\alpha} := \sum_{g \in \alpha} g$. Note that $C_\alpha$ belongs to the center of the group algebra of $\mathfrak{S}_n$, that is, $C_\alpha\in \mathcal{Z}( \Q(\mathfrak{S}_n))$ for any $\alpha$. 
%Farahat and Higman \cite{FarahatHigman1959} proved that 
The elements $C_{\alpha}$ span $\mathcal{Z}( \Q(\mathfrak{S}_n))$.

We consider the Jucys-Murphy elements $\J_k\in \Q(\mathfrak{S}_n)$, $k=2,\dots,n$, defined as
\begin{equation}
\J_k := (1 \; k) + (2 \; k) + \dots + (k-1 \; k).
\end{equation}
They generate a maximal commutative subalgebra of $\Q(\mathfrak{S}_n)$ called Gelfand-Tsetlin algebra. 

The Jucys-Murphy elements are linked to the center of the group algebra through symmetric polynomials. 
%We denote by $Z^{\mathfrak{S}_n}_{b}$ the sum of the central elements $C_\alpha$, where $\alpha$ has $n-b$ cycles, $b=0,\dots,n-1$. 

\begin{lemma}[Jucys Correspondence \cite{Jucys1974}]\label{LEMMA1}\label{lem:JucysCorrespondence1} For $b=0,\dots,n-1$ we have:
	\begin{equation}\label{succoso}
	\sigma_b(\J_2, \dots, \J_n)  = \sum_{\substack{\alpha \in \mathfrak{S}_n/\sim \\ \ell(\alpha) = n - b}} C_{\alpha} .
	\end{equation}
\end{lemma}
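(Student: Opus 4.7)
The plan is to establish the stronger generating-function identity
\begin{equation*}
\prod_{i=2}^{n} (1 + t\, \mathcal{J}_i) \;=\; \sum_{\pi \in \mathfrak{S}_n} t^{\,n-\ell(\pi)}\, \pi
\end{equation*}
inside $\mathbb{Q}(\mathfrak{S}_n)[t]$, and then extract the coefficient of $t^b$. Once this identity is proved, the left-hand side is by definition $\sum_b \sigma_b(\mathcal{J}_2,\dots,\mathcal{J}_n)\, t^b$ (cf.~\eqref{gen}), while the right-hand side, grouped by conjugacy class $\alpha$, becomes $\sum_b \bigl(\sum_{\alpha:\,\ell(\alpha)=n-b} C_\alpha\bigr) t^b$, yielding the claim.

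The identity itself I would prove by induction on $n$. The base case $n=1$ is an empty product equal to $1 = \mathrm{id}$, matching $t^{1-1} \mathrm{id}$. For the inductive step, assuming the formula holds for $\mathfrak{S}_{n-1}$ (viewed as the stabilizer of $n$ in $\mathfrak{S}_n$), I multiply both sides by $(1 + t\mathcal{J}_n)$. The term $\pi \cdot 1$ contributes $\pi$, which has $\ell_n(\pi) = \ell_{n-1}(\pi)+1$ cycles after adjoining the fixed point $n$, so the exponent $n-1-\ell_{n-1}(\pi)$ on the left becomes the desired $n-\ell_n(\pi)$. The terms $\pi \cdot t\,(j\,n)$ for $1 \le j \le n-1$ need the cycle-count check: if $\sigma := \pi(j\,n)$ and $\pi$ fixes $n$, then $\sigma$ is obtained from $\pi$ by splicing $n$ into the cycle of $j$, so $\ell(\sigma) = \ell_{n-1}(\pi)$, and the weight $t \cdot t^{n-1-\ell_{n-1}(\pi)} = t^{n-\ell(\sigma)}$ again matches.

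To conclude the inductive step I must argue that every $\sigma \in \mathfrak{S}_n$ is produced exactly once on the right-hand side of this expansion. Permutations $\sigma$ with $\sigma(n)=n$ come uniquely from the $\pi\cdot 1$ summand, while permutations with $\sigma(n)\ne n$ come uniquely from $\pi = \sigma(j\,n)$ with $j=\sigma^{-1}(n)$, giving a bijection. This will complete the induction.

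The main obstacle is bookkeeping: correctly tracking how right-multiplication by a transposition $(j\,n)$ changes the cycle count when $\pi$ fixes $n$, and verifying that the parametrization $\sigma \leftrightarrow (\pi,j)$ is indeed a bijection (as opposed to merely a surjection). Once that combinatorial lemma is nailed down, the rest is an orderly extraction of coefficients and a trivial regrouping by conjugacy class to replace $\sum_{\pi:\,\ell(\pi)=n-b} \pi$ by $\sum_{\alpha:\,\ell(\alpha)=n-b} C_\alpha$.
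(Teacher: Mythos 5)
Your proof is correct, and in fact it is essentially the classical argument of Jucys himself; the paper does not prove this lemma at all but simply cites \cite{Jucys1974}, so there is no in-paper proof to compare against. The key points you flag are exactly the ones that need checking, and your sketch handles them correctly: since $\pi$ fixes $n$, right-multiplying by $(j\,n)$ splices $n$ into the cycle of $j$ and hence keeps the cycle count equal to $\ell_{n-1}(\pi)$, while the $\pi\cdot 1$ term gains the fixed point $n$ and so raises the cycle count by one --- in both cases the exponent of $t$ becomes $n-\ell_n(\sigma)$; and the parametrization $\sigma\leftrightarrow(\pi,j)$ with $j=\sigma^{-1}(n)$ is genuinely a bijection, not just a surjection, because $j$ is forced by the condition that $\pi=\sigma(j\,n)$ must fix $n$. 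One small point worth making explicit: the Jucys--Murphy elements commute pairwise (they generate the Gelfand--Tsetlin algebra), so $\sigma_b(\J_2,\dots,\J_n)$ is unambiguous and the ordered product $\prod_{i=2}^n(1+t\J_i)$ really does have $\sigma_b(\J_2,\dots,\J_n)$ as its coefficient of $t^b$, as in~\eqref{gen}.
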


This lemma together with the result of Farahat and Higman \cite{FarahatHigman1959} implies that symmetric polynomials in the Jucys-Murphy elements generate the center of the group algebra. 
%Note that the variables are $n-1$, while the group algebra is on $\mathfrak{S}_n$. The proof rely on an induction argument. This theorem implies by Farahat-Higman:
%\begin{corollary}
%	Symmetric polynomials in the Jucys elements generate the center of the group algebra.
%\end{corollary}

Using Equation~\eqref{hcomesigma}, we obtain the following expression for the homogeneous complete polynomials of Jucys-Murphy elements:
\begin{lemma}%[Jucys Correspondence on $h$] 
	\label{lem:JucysCorrespondence2}
For $b=0,\dots,n-1$ we have:
	\begin{equation}\label{hjuicy}
	h_b(\J_2, \dots, \J_n) 
	= \sum_{k=1}^b (-1)^{k+b} \sum_{\substack{\vec{\alpha} \in (\mathfrak{S}_n/\sim)^k \\ \sum \ell(\alpha_i) = kn - b}}  \prod_{i=1}^k  C_{\alpha_i} 
	\end{equation}
\end{lemma}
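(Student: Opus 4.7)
The plan is to combine the symmetric-function identity~\eqref{hcomesigma} with the Jucys correspondence of Lemma~\ref{lem:JucysCorrespondence1} by a direct substitution followed by a change of summation variables.

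First, I would evaluate~\eqref{hcomesigma} at $(x_1,\dots,x_{n-1})=(\J_2,\dots,\J_n)$, which writes
$$
h_b(\J_2,\dots,\J_n) \;=\; \sum_{k=1}^b (-1)^{k+b} \!\!\sum_{\substack{\vec{a}\in(\N^{\times})^k\\ |a|=b}}\; \prod_{i=1}^k \sigma_{a_i}(\J_2,\dots,\J_n),
$$
where I rename the composition indices to $\vec a$ to avoid a clash with the $\vec\alpha$ appearing in the target statement.

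Second, I would apply Lemma~\ref{lem:JucysCorrespondence1} termwise to replace each factor by $\sigma_{a_i}(\J_2,\dots,\J_n)=\sum_{\ell(\beta_i)=n-a_i}C_{\beta_i}$. Expanding the product and exchanging the two summations, the inner sum becomes one over $k$-tuples $\vec\beta=(\beta_1,\dots,\beta_k)$ of conjugacy classes in $\mathfrak{S}_n$, with summand $\prod_i C_{\beta_i}$.

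The only bookkeeping step is to translate the constraints on $\vec a$ into constraints on $\vec\beta$ via the bijection $a_i\leftrightarrow n-\ell(\beta_i)$: positivity $a_i\geq 1$ corresponds to $\ell(\beta_i)\leq n-1$, and $|a|=b$ rewrites as $\sum_i(n-\ell(\beta_i))=b$, i.e.\ $\sum_i\ell(\beta_i)=kn-b$. Substituting these relabelings produces the claimed identity~\eqref{hjuicy}. I do not anticipate any serious obstacle; the argument is a straightforward substitution and reindexing, and nothing beyond~\eqref{hcomesigma} and Lemma~\ref{lem:JucysCorrespondence1} is required.
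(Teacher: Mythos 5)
Your argument is exactly the paper's: the lemma is presented there as an immediate consequence of the identity~\eqref{hcomesigma} combined with Lemma~\ref{lem:JucysCorrespondence1}, applied factorwise and reindexed via $a_i = n-\ell(\alpha_i)$, which is precisely what you do. The one point worth flagging is that the positivity $a_i\geq 1$ forces each $\alpha_i\neq(1^n)$, so the sum in~\eqref{hjuicy} must be read with that implicit restriction (the paper's notation leaves it tacit); with that understanding your substitution and bookkeeping are correct.
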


%%%%%%%%%%%%%%%%% LEMMA3 %%%%%%%%%%%%%%

We denote $h_b(\J_2, \dots, \J_n)$ by $W^{\mathfrak{S}_n}_{b}$. Let $\mathscr{C}_{m}$ be the $m$-th Catalan number.  Let us list the first few examples of $W^{\mathfrak{S}_n}_{b}$:
\begin{align*}
W^{\mathfrak{S}_n}_{0} & =1;\\
W^{\mathfrak{S}_n}_{1} & = \mathscr{C}_1C_{(2^11^{n-2})}\ (= \sigma_1(\vec{\J}) = p_1(\vec{\J}) = \text{sum of all transpositions}); \\
W^{\mathfrak{S}_n}_{2} & = \mathscr{C}_2 C_{(3^11^{n-3})} + \mathscr{C}_1^2 C_{(2^21^{n-4})} + \frac{n(n-1)}{2}C_{(1^n)}; \\
W^{\mathfrak{S}_n}_{3} & = \mathscr{C}_3 C_{(4^11^{n-4})} +  \mathscr{C}_2\mathscr{C}_1 C_{(3^12^11^{n-5})} + \mathscr{C}_1^3 C_{(2^31^{n-6})} 
\\ & \phantom{ = }
+ \left(\frac{(n+1)(n+2)}{2}-5 \right) C_{(2^11^{n-2})}.
\end{align*}
Of course, each summand appears if and only if $n$ is big enough to allow the corresponding cycle type.

%%%%%%%%%%%%%%%%%%%%%
%%%%%%%%%%%%%%%%%%%%%
%%%%%%%%%%%%%%%%%%%%%
%%%%%%%%%%%%%%%%%%%%%
%%%%%%%%%%%%%%%%%%%%%
%\newpage

\section{Ramifications of Hurwitz theory}

\label{sec:HurwitzTheory}

In this Section we define the basic objects of study in this paper --- different variations of the classical Hurwitz numbers, whose definition utilizes symmetric functions of Jucys-Murphy elements. We describe a class of problems and their geometric interpretations. 

\subsection{General setup} The general setup is the following. We consider the coefficient of $C_{(1^n)}$ in the product $C_\mu C_\nu B$ for some $B \in \Q(\mathfrak{S}_n)$:
\begin{equation}\label{count}
h^{\bullet}_{\mu,\nu,B} := \frac{1}{n!} [C_{(1^n)}] C_{\mu}C_\nu B
\end{equation}
If  $B \in \mathcal{Z}(\Q(\mathfrak{S}_n))$, then its action in the left regular representation is given by the diagonal matrix 
$\pv(B)$, whose action in the irreducible representation $\lambda$ is multiplication by the eigenvalue $\pv_\lambda(B)$. The elements $C_{\alpha}$ lie in the center and their eigenvalues are given by $\pv_\lambda(C_\alpha):=|C_\alpha|\chi_\lambda(\alpha)/\dim \lambda$, where $\dim \lambda$ and $\chi_\lambda$ are the dimension and the character of the representation $\lambda$, and $|C_\alpha|$ is the number of permutations of the cycle type $\alpha$. This implies that
\begin{align} \label{eq:characterformula}
h^{\bullet}_{\mu,\nu,B}  & = \frac{\Tr\left(\pv(C_\mu)\pv(C_\nu)\pv(B)\right)}{(n!)^2} 
\\ \notag
& = \sum_{\lambda \vdash n} \left(\frac{\dim \lambda }{n!}\right)^2 \pv_\lambda(C_\mu)\pv_\lambda(C_\nu)\pv_\lambda(B)
\\ \notag
& =
 \frac{1}{Z_{\mu}Z_{\nu}}\sum_{\lambda \vdash n} \chi_{\lambda}(\mu)\chi_{\lambda}(\nu)\pv_\lambda(B),
\end{align}
%where in the second equality we used
%$$ \Q \mathfrak{S}_n = \bigoplus_{|\lambda|=d}dim(\lambda) V_{\lambda}
%\qquad \qquad
%\textbf{pv}(C_{\alpha})(\lambda) = \frac{|C_{\alpha}| \chi^{\lambda}_{\alpha}}{dim \lambda}
%$$
where $Z_{\mu} = \prod \mu_i \prod_{i=1}^{n} (j_i)!$ for $\mu=(1^{j_1}2^{j_2}\cdots n^{j`_n})=(\mu_1\geq\cdots\geq \mu_{\ell(\mu)})$.

Let us discuss some examples. One can observe that
 \[
 \pv_\lambda(C_2)=\frac{1}{2}\sum_{i=1}^{\ell(\lambda)} (\lambda_i -i + \frac{1}{2})^2 - (-i+\frac{1}{2})^2.
 \] 
 The Hurwitz number $h^{\bullet}_{\mu,\nu,B}$ for $B=C_2^{2g-2+\ell(\mu)+\ell(\nu)}$ is the standard double Hurwitz number for possibly disconnected surfaces of genus $g$~\cite{Okounkov2000}. Consider an element $\overline{C}_r$ such that  
\[
\pv_\lambda(\overline{C}_r)=\frac{1}{r!}\sum_{i=1}^{\ell(\lambda)} (\lambda_i -i + \frac{1}{2})^{r} - (-i+\frac{1}{2})^r. 
\]
It is the so-called completed $r$-cycle~\cite{KerovOlshanski1994} (in some normalization), and the Hurwitz number $h^{\bullet}_{\mu,\nu,B}$ for $B=\overline{C}_r^m$, $m(r-1)=2g-2+\ell(\mu)+\ell(\nu)$,
is the double Hurwitz number with completed $r$-cycles for possibly disconnected surfaces of genus $g$~\cite{ShadrinSpitzZvonkine2012}.

In some cases, one can consider the enumeration of coverings up to automorphisms that fix the preimages of two special points (say, $0$ and $\infty$ in $\CP^1$) pointwise. In this case, we use the following formula instead of the one given by Equation~\eqref{eq:characterformula}:
\[
\frac{1}{\prod_{i=1}^{\ell(\mu)} \mu_i \prod_{i=1}^{\ell(\nu)} \nu_i }\sum_{ \lambda \vdash n} \chi_{\lambda}(\mu)\chi_{\lambda}(\nu)\pv_\lambda(B).
\]

\subsection{Basic definitions}
Let $\rho$ be a standard Young tableau of a Young diagram $\lambda\vdash n$. We denote by $i_k$ and $j_k$ the column and the row indices of the box labeled by $k$. By 
$$\mathsf{cr}^{\rho} := (i_1-j_1, i_2-j_2,\dots,i_n-j_n)$$ 
we denote the {content vector} of the tableau. Jucys~\cite{Jucys1974} proves that 
\begin{equation}\label{eq:JucysProperValue}
\pv_\lambda (B(\J_2, \dots, \J_n)) = B(\mathsf{cr}^{\rho}_2,\dots,\mathsf{cr}^{\rho}_n)
\end{equation}
for any symmetric polynomial $B$ in $n-1$ variables and any choice of $\rho$. Since it does not depend on $\rho$, we can always use some standard choice of the Young tableau, for instance, filling the diagram from left to right, and denote by $\mathsf{cr}^{\lambda}$ the content vector for this choice. This implies the following:
\begin{lemma}\label{zeta}
	If $B=B(\J_2,\dots,\J_n)$ is a symmetric polynomial in the Jucys elements, then 
	$$  h^{\bullet}_{\mu,\nu,B}=\frac{1}{Z_{\mu}Z_{\nu}}\sum_{ \lambda \vdash n} \chi_{\lambda}(\mu)\chi_{\lambda}(\nu)B(\mathsf{cr}^{\lambda}_2,\dots,\mathsf{cr}^\lambda_n).$$
\end{lemma}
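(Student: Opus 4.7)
The plan is to combine the general character formula \eqref{eq:characterformula} with Jucys' identity \eqref{eq:JucysProperValue}, so the proof reduces to verifying that the hypotheses of both facts are satisfied when $B$ is a symmetric polynomial in the Jucys--Murphy elements.

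First I would check that $B(\J_2,\dots,\J_n)$ lies in the center $\mathcal{Z}(\Q(\mathfrak{S}_n))$. By Lemma \ref{lem:JucysCorrespondence1}, each elementary symmetric polynomial $\sigma_b(\J_2,\dots,\J_n)$ is a $\Q$-linear combination of class sums $C_\alpha$, and so lies in $\mathcal{Z}(\Q(\mathfrak{S}_n))$. Since the $\sigma_b$ generate the algebra of symmetric polynomials and the center is closed under sums and products, $B(\J_2,\dots,\J_n)\in\mathcal{Z}(\Q(\mathfrak{S}_n))$. This is exactly the hypothesis under which the character formula \eqref{eq:characterformula} was derived, so we may apply it to obtain
\begin{equation*}
h^\bullet_{\mu,\nu,B}=\frac{1}{Z_\mu Z_\nu}\sum_{\lambda\vdash n}\chi_\lambda(\mu)\chi_\lambda(\nu)\,\pv_\lambda\bigl(B(\J_2,\dots,\J_n)\bigr).
\end{equation*}

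The second ingredient is Jucys' identity \eqref{eq:JucysProperValue}, which asserts that $\pv_\lambda\bigl(B(\J_2,\dots,\J_n)\bigr)=B(\mathsf{cr}^\rho_2,\dots,\mathsf{cr}^\rho_n)$ for any standard tableau $\rho$ of shape $\lambda$ and any symmetric polynomial $B$. Since the left-hand side depends only on $\lambda$, the right-hand side is independent of the choice of $\rho$, and we may take the canonical tableau giving the content vector $\mathsf{cr}^\lambda$. Substituting this into the previous display yields the claimed formula.

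There is no real obstacle; the statement is essentially a direct corollary of two results already recalled in the excerpt. The only point requiring a line of justification is the centrality of $B(\J_2,\dots,\J_n)$, which one could alternatively deduce from the fact that the Jucys--Murphy elements commute pairwise and their symmetric functions are invariant under conjugation by $\mathfrak{S}_n$ (this is essentially the content of Jucys' theorem). Once centrality is in hand, the two-line computation above closes the argument.
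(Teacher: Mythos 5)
Your proof is correct and is essentially the argument the paper intends: Lemma~\ref{zeta} is presented there as an immediate consequence of the character formula~\eqref{eq:characterformula} together with Jucys' eigenvalue identity~\eqref{eq:JucysProperValue}, with the centrality of $B(\J_2,\dots,\J_n)$ supplied (as you do) by Lemma~\ref{lem:JucysCorrespondence1} and the fact that the $\sigma_b$ generate the symmetric polynomials. Only your parenthetical alternative justification of centrality is shaky --- conjugation by $\mathfrak{S}_n$ does not permute the Jucys--Murphy elements themselves, so invariance of their symmetric functions is genuinely a theorem of Jucys rather than a formal consequence --- but your primary argument is sound, so this does not affect the proof.
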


\begin{definition}\label{def:HurwitzProblem} A disconnected double Hurwitz problem is the following set of data: genus $g$, degree $n$, two partitions $\mu,\nu\vdash n$, and a vector $\vec{\mathcal{P}}=(\mathcal{P}_1,\dots,\mathcal{P}_m)$, $m\geq 1$, where each $\mathcal{P}_i$ is a central element of $\Q(\mathfrak{S}_n)$. We assign to each $\mathcal{P}_i$ a number $b_i$ and we require the Riemann-Hurwitz equation to hold: $\sum_{i=1}^m b_i = 2g-2+\ell(\mu)+\ell(\nu)$. The associated Hurwitz number is then $h^\bullet_{\mu,\nu,B}$ for $B:=\prod_{i=1}^m \mathcal{P}_i$, and it can be expressed as 
	\begin{equation}\label{sol}
	h^{\bullet}_{\mu, \nu,B} =
	\frac{1}{n!} [C_{(1^n)}] C_{\mu}C_\nu \prod_{i=1}^m \mathcal{P}_i
	\end{equation}
We call elements $\mathcal{P}_i$ blocks and the vector $\vec{\mathcal{P}}$ the vector of blocks. 

Here are some possible blocks (that is, the possible values of $\mathcal{P}_i$, $i=1,\dots,m$), which are arguably most important for applications:
\begin{align*}
B^{<}_{b}  := \sigma_{b}(\J_2, \dots, \J_n); \
 B^{\leq}_b & := h_{b}(\J_2, \dots, \J_n); \
B^{\times}_{b}  := p_b(\J_2, \dots, \J_n); \\
B^{|}_{b}  := \sum_{\substack{\alpha \in (\mathfrak{S}_n/\sim) \\ \ell(\alpha)=n-b }} C_{\alpha}; \ {\ }
& B^{||}_{b}   := \sum_{k=1}^b (-1)^{k+b} \sum_{\substack{\vec{\alpha} \in (\mathfrak{S}_n/\sim)^k \\ \sum \ell(\alpha_i) = kn - b}} \prod_{i=1}^k  C_{\alpha_i} 
\end{align*}
In all these cases $b_i:=b$. 
\end{definition}

In each of this cases we can describe the geometry of the covering that realizes the monodromy of the block. The descriptions follow directly from the definition of the Jucys-Murphy elements $\J_k$, $k=2,\dots,n$, and the central elements $C_\lambda$, $\lambda\vdash n$. 

\begin{lemma}\label{blocks} The geometric interpretation of the possible blocks is the following:
\begin{itemize}
\item[$B^{<}_{b}$] \emph{[Strictly Monotone]} We have $b$ simple ramifications, whose mono\-dromies are given by the transpositions $(x_i \; y_i)$, $x_i<y_i$, $i=1,\dots,b$, with the extra condition $y_i < y_{i+1}$.

\item[$B^{\leq}_b$] \emph{[Monotone]} We have $b$ simple ramifications, whose monodromies are given by the transpositions $(x_i \; y_i)$, $x_i<y_i$, $i=1,\dots,b$, with the extra condition $y_i \leq y_{i+1}$.

\item[$B^{\times}_b$] \emph{[Atlantes]} We have $b$ simple ramifications, whose monodromies are given by the transpositions $(x_i \; y)$, $i=1,\dots,b$. Here $y$ is an arbitrary number from $2$ to $n$, which is not fixed in advance, but is the same for all transpositions. 

\item[$B^{|}_b$] \emph{[Free Single]} We have one ramification, whose monodromy has no restrictions except for the Euler characteristic of the preimage of the corresponding disk, that is, the monodromy given by a cycle type $\mu$ with $\ell(\mu)=n-b$.

\item[$B^{||}_b$] \emph{[Free Group]} We have an arbitrary number $k$ of ramifications, $1\leq k\leq b$ with no restrictions on the monodromy except for the restriction on the Euler characteristic: the total number of zeros of the differential of the corresponding covering should be equal to $b$.  The coverings are counted with an extra sign $(-1)^{k + b}$.
\end{itemize}
\end{lemma}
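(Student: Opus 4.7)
The plan is to verify each of the five interpretations separately by reducing to the classical monodromy dictionary: a (possibly disconnected) covering of $\CP^1$ with ramification profiles $\mu$ over $0$, $\nu$ over $\infty$, and intermediate monodromies $\tau_1,\ldots,\tau_m$ is described by a tuple $(\sigma_0,\tau_1,\ldots,\tau_m,\sigma_\infty)\in\mathfrak{S}_n^{m+2}$ satisfying $\sigma_0\tau_1\cdots\tau_m\sigma_\infty=\mathrm{id}$, where $\sigma_0\in C_\mu$ and $\sigma_\infty\in C_\nu$, counted with weight $1/n!$ for the residual conjugation action. Under this dictionary the coefficient $\frac{1}{n!}[C_{(1^n)}]\,C_\mu C_\nu \prod_i \mathcal{P}_i$ enumerates such tuples with the intermediate factors $\tau_i$ drawn from the support of $\mathcal{P}_i$, weighted by the coefficients appearing in the expansion of $\mathcal{P}_i$ in the permutation basis. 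All five cases of the lemma are then obtained by specializing this general principle.

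For the three symmetric-polynomial blocks $B^<_b$, $B^{\leq}_b$ and $B^\times_b$, I would expand the defining symmetric polynomial directly and use $\J_k=\sum_{x<k}(x\;k)$. For instance,
\begin{equation*}
\sigma_b(\J_2,\ldots,\J_n)=\sum_{2\leq y_1<\cdots<y_b\leq n}\J_{y_1}\cdots \J_{y_b}=\sum_{2\leq y_1<\cdots<y_b\leq n}\;\sum_{x_i<y_i}(x_1\;y_1)\cdots(x_b\;y_b),
\end{equation*}
so each monomial is precisely a tuple of $b$ transpositions with strictly increasing largest entries, matching the Strictly Monotone description. Replacing the strict inequality by a weak one gives the Monotone case for $h_b$, and the power-sum expansion $p_b(\J_2,\ldots,\J_n)=\sum_{y=2}^n \J_y^b$ produces all $b$-tuples of transpositions that share a common top index $y$, which is the Atlantes description. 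In all three cases the transpositions correspond to simple ramification points, and the Riemann-Hurwitz count $\sum b_i=2g-2+\ell(\mu)+\ell(\nu)$ from Definition \ref{def:HurwitzProblem} furnishes the genus.

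For the two central-element blocks $B^|_b$ and $B^{||}_b$, the element $\mathcal{P}_i$ is already a signed sum of class sums $C_\alpha$, so the monodromy dictionary applies verbatim: multiplying by $C_\alpha$ attaches one ramification point whose monodromy ranges over all permutations of cycle type $\alpha$. It only remains to check that the index constraint in the definition matches the topological description. For $B^|_b$ the single condition $\ell(\alpha)=n-b$ expresses that a ramification with cycle type $\alpha$ contributes exactly $b=n-\ell(\alpha)$ to the total number of zeros of the differential, consistent with the Riemann-Hurwitz contribution $b_i$. For $B^{||}_b$ the constraint $\sum_i\ell(\alpha_i)=kn-b$ is the same condition summed over the $k$ ramifications, and the sign $(-1)^{k+b}$ is inherited verbatim from Lemma \ref{lem:JucysCorrespondence2}, which in turn comes from the Newton-type identity \eqref{hcomesigma}. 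Observe that Lemmas \ref{lem:JucysCorrespondence1} and \ref{lem:JucysCorrespondence2} give the algebraic identities $B^<_b=B^|_b$ and $B^{\leq}_b=B^{||}_b$, so these pairs give two genuinely different geometric enumerations of the same Hurwitz number.

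The only step requiring any care is the bookkeeping in the Free Group case, where the sign and the constraint on $\sum_i \ell(\alpha_i)$ must be traced through the Newton identity; everything else is a straightforward unpacking of definitions combined with the standard monodromy-to-covering correspondence.
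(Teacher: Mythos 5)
Your proposal is correct and follows essentially the same route as the paper, which offers no formal proof but simply states that the descriptions ``follow directly from the definition of the Jucys--Murphy elements $\J_k$ and the central elements $C_\lambda$''; your expansion of $\sigma_b$, $h_b$, $p_b$ in the $\J_k$ and the Riemann--Hurwitz bookkeeping for the class-sum blocks is precisely the intended unpacking of those definitions.
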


The Jucys correspondence given by Lemmas~\ref{lem:JucysCorrespondence1} and~\ref{lem:JucysCorrespondence2} implies the following equalities:
\begin{proposition}\label{jucysonblocks} We have
	$B^{<}_b= B^{|}_{b}$ and $B^{\leq}_{b} = B^{||}_{b}$.
\end{proposition}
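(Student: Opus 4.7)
The proposition is essentially a direct reading of the Jucys correspondence lemmas already stated, so my plan is to simply match definitions to the conclusions of Lemma~\ref{lem:JucysCorrespondence1} and Lemma~\ref{lem:JucysCorrespondence2}. No new computation is required.

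For the first identity, I would unpack the definition $B^{<}_b := \sigma_b(\J_2, \dots, \J_n)$ and the definition $B^{|}_b := \sum_{\alpha \in \mathfrak{S}_n/\sim,\ \ell(\alpha)=n-b} C_\alpha$. Equation~\eqref{succoso} of Lemma~\ref{lem:JucysCorrespondence1} is literally the statement that these two expressions coincide as elements of $\Q(\mathfrak{S}_n)$, so the equality $B^{<}_b = B^{|}_b$ is immediate.

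For the second identity, I would analogously compare $B^{\leq}_b := h_b(\J_2, \dots, \J_n)$ with $B^{||}_b := \sum_{k=1}^b (-1)^{k+b} \sum_{\vec{\alpha}\in(\mathfrak{S}_n/\sim)^k,\ \sum\ell(\alpha_i)=kn-b} \prod_{i=1}^k C_{\alpha_i}$. These coincide by Equation~\eqref{hjuicy} of Lemma~\ref{lem:JucysCorrespondence2}, which in turn follows from the first identity combined with the Newton-type relation~\eqref{hcomesigma} expressing the complete homogeneous symmetric polynomials in terms of the elementary ones. Since both of these ingredients are already available, the second equality follows at once.

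There is no real obstacle here: the content of the proposition is exactly that the two ways of packaging symmetric functions of Jucys--Murphy elements --- once via the symmetric-function basis ($\sigma_b$ or $h_b$) and once via centralizer elements graded by total cycle-length deficit --- produce the same central element. The hard work has been done in stating and proving the Jucys correspondence; the proposition just names both sides and records their equality for later use (so that the geometric interpretations of Lemma~\ref{blocks} can be freely swapped between the $B^{<}_b$/$B^{\leq}_b$ and $B^{|}_b$/$B^{||}_b$ viewpoints).
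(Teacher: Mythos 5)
Your proof is correct and matches the paper exactly: the paper derives Proposition~\ref{jucysonblocks} as an immediate consequence of Lemmas~\ref{lem:JucysCorrespondence1} and~\ref{lem:JucysCorrespondence2}, which is precisely your argument of unpacking the definitions of the blocks and matching them to Equations~\eqref{succoso} and~\eqref{hjuicy}. No further comment is needed.
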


\subsection{Examples} Here we survey some examples of disconnected double Hurwitz problems in the sense of Definition~\ref{def:HurwitzProblem} known in the literature.

\subsubsection{The Harnad-Orlov correspondence}
In \cite{HarnadOrlov2015} Harnad and Orlov prove that a family of 2D Toda tau-functions of hypergeometric type have two different geometric interpretations involving double Hurwitz problems. Their Theorem 2.1 expresses these tau-functions in terms of some Hurwitz numbers of some special type and their Theorem 2.2 deals with enumeration of paths in Cayley graphs. We review these two theorems and show that Jucys correspondence implies their equivalence. 

The hypergeometric function $\tau_{(q,w,z)}(\mathbf{t},\mathbf{\tilde{t}})$ is defined as
\begin{equation}\label{HarnadO}
\tau_{(q,w,z)}(\mathbf{t},\mathbf{\tilde{t}}) := 
\sum_{n=0}q^n \sum_{\lambda\vdash n} \prod_{j=1}^n \frac{\prod_{a=1}^l (1 + \mathsf{cr}^{\lambda}_j w_a)}{\prod_{b=1}^m (1 - \mathsf{cr}^{\lambda}_j z_b)} s_\lambda(\mathbf{t}) s_\lambda(\mathbf{\tilde{t}})
\end{equation}
Here $w=(w_1,\dots,w_l)$ and $z=(z_1,\dots,z_m)$ are the parameters of the tau-function, and their number ($l$ and $m$ respectively) is arbitrary, not necessarily finite. For particular values of these parameters the  hypergeometric tau-functions represent all generating functions of the Hurwitz numbers considered below. Using the generating functions of $\sigma_b$ and $h_b$ (see Equation \eqref{gen}) we rewrite (\ref{HarnadO}) as
	$$\sum_{n=0}q^n \sum_{\lambda\vdash n} \sum_{\substack{c \in \N^l \\ d \in \N^m}} \prod_{a=1}^l w_a^{c_a} \sigma_{c_a}(\mathsf{cr}^{\lambda})   \prod_{b=1}^m z_b^{d_b} h_{d_b}(\mathsf{cr}^{\lambda})
	s_\lambda(\mathbf{t}) s_\lambda(\mathbf{\tilde{t}})
	$$
Since $s_\lambda(\mathbf{t}) = \sum_{\mu\vdash |\lambda|} \chi_\lambda(\mu) p_\mu(\mathbf{t}) / Z_\mu$, the coefficient of 	
$$
q^n \prod_{a=1}^l \prod_{b=1}^m  w_a^{c_a}z_b^{d_b} p_\mu(\mathbf{t}) p_\nu(\mathbf{\tilde t})
$$
in this expression is equal to  
$$
\sum_{\lambda\vdash n} \frac{\chi_\lambda(\mu)  \chi_\lambda(\nu)}{Z_\mu Z_\nu}\prod_{a=1}^l \sigma_{c_a}(\mathsf{cr}^{\lambda}) \prod_{b=1}^m  h_{d_b}(\mathsf{cr}^{\lambda})  
$$
Lemma~\ref{zeta}, Lemma~\ref{blocks}, and Jucys statement about eigenvalues~\eqref{eq:JucysProperValue} imply that this coefficient is equal to $h^{\bullet}_{\mu, \nu,B}$ for the vector of blocks given by 
$$
\vec{\mathcal{P}}:= \left(B^{<}_{c_1},\dots,B^{<}_{c_l},B^{\leq}_{d_1},\dots,B^{\leq}_{d_m}\right)
$$
This is the way Harnad and Orlov prove \cite[Theorem 2.2]{HarnadOrlov2015}. Now Proposition~\ref{jucysonblocks} implies a different interpretation of the same Hurwitz number, namely, 
$$
\vec{\mathcal{P}} = \left(B^{|}_{c_1},\dots,B^{|}_{c_l},B^{||}_{d_1},\dots,B^{||}_{d_m}\right),
$$
which proves \cite[Theorem 2.1]{HarnadOrlov2015}.

\begin{remark}
	The monotone and strictly monotone blocks are expressed in \cite{HarnadOrlov2015} as counting paths in the Cayley graph of $\mathfrak{S}_n$. For convenience, we express them as a Hurwitz problem here. 
\end{remark}
\begin{remark}
	We also adjust a small inconsistency: observe that our weight in each summand of the free group block is $(-1)^{k+b}$, while in \cite{HarnadOrlov2015} it is $(-1)^{n + k + b}$.  
\end{remark}

\begin{remark}
	The solutions of the Hurwitz problem in genus zero with $\nu = (1^n)$ and a single block $B^{||}_{b, k}$ (which coincides with $B^{||}_{b}$ except that the number of groups $k$ is fixed and it is not weighted by sign) is known as Bousquet-M\'{e}lou Shaeffer numbers~\cite{BMS2000}, see also~\cite{KazarianLando2015}.
\end{remark}

\subsubsection{Enumeration of hypermaps}

The enumeration of hypermaps, or, more generally, of Gro\-thendieck's dessins d'enfants, is considered in many recent papers in slightly different formulations in relation to the Chekhov--Eynard--Orantin recursion, quantum curves, and KP/Toda integrability. An incomplete list of recent references includes~\cite{dunin2014combinatorics,KazarianZograf2015,DoManescu2014,giorgetti2015enumeration,Zograf2013,GouldenJackson2008,AlexandrovMironovMorozovNatanzon2014,AmbjornChekhov2014}.

Enumeration of hypermaps is equivalent to the standard weighted count of the coverings of degree $n$ of a sphere $\CP^1$ by a surface of genus $g$ (or, rather, a possibly disconnected surface of Euler characteristic $2-2g$) that have three ramification points, $0$, $1$, and $\infty$, such that
\begin{itemize}
	\item[--] The monodromy over $0$ has cycle type $\mu\vdash n$, which is a parameter of the enumeration problem.
	\item[--] The monodromy over $\infty$ has cycle type $(r^{n/r})$, $r$ is a parameter of the enumeration problem, and we assume that $r|n$. 
	\item[--] The monodromy over $1$ is an arbitrary one. Let us denote it by $\kappa\vdash n$. The only restriction that we have here is imposed by the Riemann-Hurwitz formula $2g-2+\ell(\mu)+n/r=n-\ell(\kappa)$.
\end{itemize}
In our terms, this enumeration problem can be reformulated as a Hurwitz number $h^{\bullet}_{\mu,\nu,B}$, where $\nu=(r^{|\mu|/r})$, the vector of blocks $\vec{\mathcal{P}}=(\mathcal{P}_1)$ has length 1, and 
\[
B=\mathcal{P}_1:=B^|_{2g-2+\ell(\mu)+|\mu|/r}.
\] 
Proposition~\ref{jucysonblocks} implies that 
\[
B=B^<_{2g-2+\ell(\mu)+|\mu|/r}.
\] 

The Hurwitz numbers for the data 
\[
(g,n,\mu,\nu=(r^{n/r}), \vec{\mathcal{P}}=(B^\leq_{2g-2+\ell(\mu)+n/r}))
\]
are called monotone orbifold Hurwitz numbers in~\cite{DoKarev2015} (orbifold here refers to the type of partition $\nu$), so it is natural to call Hurwitz numbers for the data 
\[
(g,n,\mu,\nu=(r^{n/r}), \vec{\mathcal{P}}=(B^<_{2g-2+\ell(\mu)+n/r}))
\]
strictly monotone orbifold Hurwitz numbers. Then the observation above can be reformulated as follows:

\begin{proposition}\label{prop:hypermaps-strictlymon}
	The enumeration of hypermaps is equivalent to the strictly monotone orbifold Hurwitz problem.
\end{proposition}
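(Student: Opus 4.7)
The plan is to observe that the proposition is essentially a direct packaging of the discussion immediately preceding it, so the proof amounts to carefully matching the definitions and invoking Proposition~\ref{jucysonblocks}.

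First I would translate the hypermap enumeration problem into the language of Definition~\ref{def:HurwitzProblem}. The data of a hypermap consists of a covering of $\CP^1$ of degree $n$ by a (possibly disconnected) genus $g$ surface with three special fibers: cycle type $\mu$ over $0$, cycle type $\nu = (r^{n/r})$ over $\infty$, and an unrestricted cycle type $\kappa$ over $1$ whose only constraint comes from Riemann--Hurwitz, namely $\ell(\kappa) = n - (2g - 2 + \ell(\mu) + n/r)$. Weighting coverings with the usual $1/n!$ factor and summing over the allowed $\kappa$, the corresponding weighted count takes precisely the form $h^{\bullet}_{\mu,\nu,B}$ from Equation~\eqref{count}, where $B$ is the single central element that sums $C_\kappa$ over all cycle types $\kappa \vdash n$ with $\ell(\kappa) = n - b$ for $b := 2g - 2 + \ell(\mu) + n/r$.

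Next I would match this $B$ with a block from the list in Definition~\ref{def:HurwitzProblem}. By inspection, this is exactly the free single block
\[
B^{|}_{b} = \sum_{\substack{\alpha \in \mathfrak{S}_n/\sim \\ \ell(\alpha) = n - b}} C_\alpha,
\]
and the geometric description in Lemma~\ref{blocks} of $B^{|}_{b}$ is literally the unrestricted monodromy at a single ramification point with the Euler-characteristic constraint, matching the condition at $1\in\CP^1$ for hypermaps. So the hypermap count equals $h^{\bullet}_{\mu,\nu,B}$ with $\vec{\mathcal{P}} = (B^{|}_{b})$ and $\nu = (r^{n/r})$.

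Finally I would apply Proposition~\ref{jucysonblocks}, which asserts $B^{<}_{b} = B^{|}_{b}$ as elements of $\mathcal{Z}(\Q(\mathfrak{S}_n))$. Substituting this into the formula for the hypermap Hurwitz number replaces the vector of blocks by $\vec{\mathcal{P}} = (B^{<}_{b})$, which is, by the definition given just before the proposition, the strictly monotone orbifold Hurwitz problem for the data $(g, n, \mu, \nu = (r^{n/r}))$ with $b = 2g - 2 + \ell(\mu) + n/r$. This establishes the claimed equivalence. There is no real obstacle here; the only thing to be careful about is the bookkeeping between the combinatorial description of hypermaps (monodromy at $1$ is arbitrary) and the algebraic description (sum of all $C_\alpha$ with fixed $\ell(\alpha)$), but this correspondence is built into the definition of the $C_\alpha$.
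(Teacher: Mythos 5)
Your proposal is correct and follows exactly the paper's own route: the proposition is stated as a reformulation of the preceding discussion, which identifies the hypermap count as $h^{\bullet}_{\mu,\nu,B}$ with $\nu=(r^{n/r})$ and the single free block $B^{|}_{2g-2+\ell(\mu)+n/r}$, and then invokes Proposition~\ref{jucysonblocks} (i.e.\ the Jucys correspondence of Lemma~\ref{lem:JucysCorrespondence1}) to replace $B^{|}_b$ by $B^{<}_b$. Nothing is missing.
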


%This proposition will allow us to get a new link of the enumeration of hypermaps to KP integrability and derive in a new simple way the quantum curve for this problem.

\begin{remark} This proposition also implicitly follows from the discussion in~\cite[Section 1]{AlexandrovMironovMorozovNatanzon2014}, in a different way.
\end{remark}

% % % % % % % % % % % % % % % %
% % % % % % % % % % % % % % % %
% % % % % % % % % % % % % % % %
% % % % % % % % % % % % % % % %
% % % % % % % % % % % % % % % %

%\newpage

\section{Operators for $B^<_b$, $B^\leq_b$, and $B^\times_b$}
\label{sec:OperatorsForBlocks}

In this Section we derive the operators that represent the blocks $B^<_b$, $B^\leq_b$, and $B^\times_b$ in the semi-infinite wedge formalism and provide the corresponding differential operators.

\subsection{Derivation of operators} Recall that the Casimir operator $\tilde \cE_0 (z)$ on $V_0$
is a $\widehat{\mathfrak{gl}}_\infty$-operator (\ref{gfCasim}) that generates completed cycles. We would like to construct the same operators for the blocks $B^<_b$, $B^\leq_b$, and $B^\times_b$, that is, we are looking for the operators $\cD^{(p)}(z)$, $\cD^{(h)}(z)$, and $\cD^{(\sigma)}(z)$ defined on $V_0$ and acting on the basis vectors as follows:
\begin{align*}
\cD^{(p)}(z) v_\lambda & =  \sum_{k=1}\frac{z^k}{k!} p_k(\textsf{cr}^{\lambda})v_{\lambda}, &
\cD^{(h)}(z) v_\lambda & =  \sum_{k=0}z^{k} h_k(\textsf{cr}^{\lambda})v_{\lambda},\\
\cD^{(\sigma)}(z) v_\lambda & =  \sum_{k=0} z^{k} \sigma_k(\textsf{cr}^{\lambda})v_{\lambda}. 
\end{align*}
Since it is not important how we arrange the generating functions, we do it in the way that is most convenient for the proof below. 

\begin{remark} 
While $\cD^{(p)}(z)$ is an element of the $\widehat{\mathfrak{gl}}_\infty$ Lie algebra, operators $\cD^{(h)}(z)$ and $\cD^{(\sigma)}(z)$ belong to the corresponding group. From the Newton identities it follows that
\begin{equation*}
\cD^{(h)}(z)=\frac{1}{\cD^{(\sigma)}(-z) }.
\end{equation*}

\end{remark}

\begin{proposition} These operators, as the formal series in $z$, are given by the following formulas:
\begin{align*}
\cD^{(p)}(z) & =  \frac{\tilde \cE_0(z)}{\zeta(z)} - E, \\
\cD^{(h)}(z) & =  z^{\frac{\tilde \cE_0\left(z^2\frac{d}{dz}\right)}{\zeta\left(z^2\frac{d}{dz}\right)} - E} := \exp\left(\left[\frac{\tilde \cE_0\left(z^2\frac{d}{dz}\right)}{\zeta\left(z^2\frac{d}{dz}\right)} - E\right] \log z \right) ,\\
\cD^{(\sigma)}(z) & =  z^{-\frac{\tilde \cE_0\left(-z^2\frac{d}{dz}\right)}{\zeta\left(-z^2\frac{d}{dz}\right)} +E}
:= \exp\left(-\left[\frac{\tilde \cE_0\left(-z^2\frac{d}{dz}\right)}{\zeta\left(-z^2\frac{d}{dz}\right)} - E\right] \log z \right). 
\end{align*}
\end{proposition}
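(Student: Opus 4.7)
Since each of the operators $\cD^{(p)}(z)$, $\cD^{(h)}(z)$, $\cD^{(\sigma)}(z)$ is by design diagonal on the Schur basis, the proposition reduces to matching their eigenvalues on $v_\lambda$. The essential geometric input is that the multiset of entries of $\mathsf{cr}^\lambda$ (together with the harmless content $0$ from the box $(1,1)$) is exactly $\{\,c-r : 1\le r\le \ell(\lambda),\ 1\le c\le \lambda_r\,\}$, the multiset of contents of $\lambda$. In particular, for $k\ge 1$ the value $p_k(\mathsf{cr}^\lambda)$ coincides with $\sum_{\mathrm{box}} c^k$, while $\tilde p_0 := |\lambda|$ differs from $p_0$ only harmlessly.

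I would start with $\cD^{(p)}(z)$. Summing a finite geometric series and comparing with the eigenvalue formula for $\tilde\cE_0(u)$ recalled in the excerpt,
\begin{equation*}
\frac{\tilde\cE_0(u)}{\zeta(u)}\,v_\lambda
\;=\;
\sum_{r=1}^{\ell(\lambda)}\frac{e^{u(\lambda_r-r+1/2)}-e^{u(-r+1/2)}}{e^{u/2}-e^{-u/2}}\,v_\lambda
\;=\;
\sum_{r,c} e^{u(c-r)}\,v_\lambda
\;=\;
\sum_{k\ge 0}\frac{u^k}{k!}\,\tilde p_k(\mathsf{cr}^\lambda)\,v_\lambda .
\end{equation*}
Subtracting the action of $E$, whose eigenvalue is $|\lambda|=\tilde p_0$, cancels precisely the $k=0$ term and leaves the defining eigenvalue $\sum_{k\ge 1}(u^k/k!)\,p_k(\mathsf{cr}^\lambda)$ of $\cD^{(p)}(u)$.

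For $\cD^{(h)}(z)$ I would invoke the Newton identity~\eqref{hsigmacoeff}, $\sum_k z^k h_k = \exp\bigl(\sum_{k\ge 1}(z^k/k)\,p_k\bigr)$: it therefore suffices to realise $\sum_{k\ge 1}(z^k/k)\,p_k(\mathsf{cr}^\lambda)$ as the $v_\lambda$-eigenvalue of $\bigl[\tilde\cE_0(D)/\zeta(D)-E\bigr]\log z$, with $D:=z^2\,\d/\d z$; this is exactly what the notation $z^{A}=\exp(A\log z)$ means in the statement. A direct computation gives $D^k\log z=(k-1)!\,z^k$ for $k\ge 1$ and $D^0\log z=\log z$, so substituting $u=D$ in the identity of the previous paragraph and applying to $\log z$ yields $\tilde\cE_0(D)/\zeta(D)\,\log z\cdot v_\lambda = |\lambda|\log z\,v_\lambda + \sum_{k\ge 1}(z^k/k)\,p_k(\mathsf{cr}^\lambda)\,v_\lambda$. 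Subtracting $E\log z\,v_\lambda=|\lambda|\log z\,v_\lambda$ and exponentiating reproduces $\cD^{(h)}(z)\,v_\lambda=\sum_k z^k h_k(\mathsf{cr}^\lambda)\,v_\lambda$. The formula for $\cD^{(\sigma)}(z)$ then follows from the remark's observation $\cD^{(\sigma)}(z)=1/\cD^{(h)}(-z)$ (which is the elementary-versus-complete Newton identity in generating-function form), or equivalently by rerunning the same argument with $D$ replaced by $-D$ and an overall sign flip on the exponent.

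The main obstacle is interpretative rather than computational: one must pin down $\tilde\cE_0(D)/\zeta(D)$ as a formal power series in the commuting operator $D = z^2\,\d/\d z$ with operator-valued coefficients on $V_0$ diagonal in the Schur basis, and then check that evaluating this series on the scalar $\log z$ (with $v_\lambda$ carried along passively as an eigenvector) produces the generating series $\sum_{k\ge 1}(z^k/k)\,p_k(\mathsf{cr}^\lambda)$ required by Newton's identity. Once this bookkeeping is fixed, the proof collapses to the single geometric-series computation of the first step plus the elementary evaluation of $D^k\log z$.
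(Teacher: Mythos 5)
Your proof is correct and follows the same overall route as the paper: establish the formula for $\cD^{(p)}$ first, then obtain $\cD^{(h)}$ and $\cD^{(\sigma)}$ from the Newton identities. The only substantive difference is that where the paper simply cites Lascoux--Thibon (Proposition 3.3) for the action of the power sums of the Jucys elements and notes that the constant term of $\tilde\cE_0(z)/\zeta(z)$ is $E$, you prove that step directly via the telescoping identity $\sum_i \bigl(e^{u(\lambda_i-i+1/2)}-e^{u(-i+1/2)}\bigr)/\zeta(u)=\sum_{\mathrm{boxes}} e^{u\cdot\mathrm{content}}$ and make explicit the evaluation $D^k\log z=(k-1)!\,z^k$ that the paper leaves implicit; both computations are right.
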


\begin{proof} The action of the power sums of Jucys elements was computed by Lascoux and Thibon in~\cite[Proposition 3.3]{LascouxThibon2001}. The formula for $\cD^{(p)}(z)$ is equivalent to their result. Note that the constant term of $\tilde \cE(z) / \zeta(z)$ is precisely $E$. The formulas for $\cD^{(h)}(z)$ and $\cD^{(\sigma)}(z)$ follow from the Newton identities~\eqref{hsigmacoeff}.
\end{proof}

\begin{remark} Since we know the differential operator (\ref{gfCasim}) that corresponds to $\tilde{\mathcal{E}}_0$, we immediately obtain the differential operators corresponding to $\mathcal{D}^{(p)}$, $\mathcal{D}^{(h)}$, and $\mathcal{D}^{(\sigma)}$.
\end{remark}

\begin{remark}
	Note that the formula for $\cD^{(\sigma)}(z)$ was already observed in~\cite[Section 3]{AlexandrovMironovMorozovNatanzon2014}.
\end{remark}

\begin{remark} The operators  $\cD^{(p)}(z_1)$, $\cD^{(h)}(z_2)$, and $\cD^{(\sigma)}(z_3)$ commute with each other for arbitrary values of $z_1$, $z_2$ and $z_3$.
\end{remark}

\subsection{Some examples} 
\label{subsec:SomeExamples}
In this Section we list some examples of particular Hurwitz problems whose generating functions are written as vacuum expectations in semi-infinite wedge formalism.  

\begin{example} 
Simple orbifold Hurwitz numbers:
	\begin{align*}
	\mathcal{Z}(\mathbf{p};\beta) = \cor{ \exp\left(\sum_{i=1}^{\infty} \frac{p_i\alpha_i}{i}\right)\exp\left(\hbar \cF_2\right) \exp\left(\frac{\alpha_{-r}}{\beta r}\right) },
	\end{align*}
	where $\cF_2= [z^1] \mathcal{D}^{(\sigma)}(z)$ is the second Casimir.
	Note that here we could use $\cD^{(h)}$ instead of $\cD^{(\sigma)}$ since their $[z^1]$ coefficients coincide.  
\end{example}

\begin{example} \label{example:one-parameter}
	A one-parameter deformation of simple Hurwitz numbers in the tau-function of double Hurwitz numbers:
\begin{equation*}
\mathcal{Z}(\mathbf{p},\beta)=\cor{\exp\left(\sum_{i=1}^\infty \frac {p_i\alpha_i}{i} \right) \exp \left(\beta\cF_2\right) \exp\left(\sum_{i=1}^\infty {\frac{c^{i-1}\alpha_{-i}}{\beta}}\right)}
\end{equation*}
\end{example}
 
\begin{example}\label{example:monotone}
	Monotone orbifold Hurwitz numbers:
	\begin{equation*}
	\mathcal{Z}(\mathbf{p};\beta) = \cor{\exp \left(\sum_{i=1}^{\infty} \frac{p_i\alpha_i}{i}  \right) 
		\mathcal{D}^{(h)}(\beta)
		\exp\left(\frac{\alpha_{-r}}{\beta r }\right)}
	\end{equation*}
\end{example}

\begin{example}\label{example:strictlymonotone}
	Strictly monotone orbifold Hurwitz numbers (or hypermaps, see Proposition~\ref{prop:hypermaps-strictlymon}):
	\begin{equation*}
	\mathcal{Z}(\mathbf{p};\beta) = \cor{\exp \left(\sum_{i=1}^{\infty} \frac{p_i\alpha_i}{i}  \right) 
		\mathcal{D}^{(\sigma)}(\beta)
		\exp\left(\frac{\alpha_{-r}}{\beta r }\right)}
	\end{equation*}
	%The principal specialization is intended here around infinity.
\end{example}

\begin{example} \label{example:atlantes}
	Atlantes orbifold Hurwitz numbers:
	\begin{align*}
	\mathcal{Z}(\mathbf{p};\beta) & =\cor{\exp \left(\sum_{i=1}^{\infty} \frac{p_i\alpha_i}{i}  \right) 
		\exp(r! [z^r]	\mathcal{D}^{(p)}(z \beta))
		\exp\left(\frac{\alpha_{-q}}{\beta q}\right)}
	\end{align*}
\end{example}

%\begin{example}
%	A special case of Harnad-Orlov coverings:
%	\begin{align*}
%	 \mathcal{Z}(\mathbf{p};\beta) & = 
%	 \cor{\exp \left(\sum_{i=1}^{\infty} \frac{p_i\alpha_i}{i}  \right) 
%		\exp\left(\mathcal{D}^{(\sigma)}(\beta^{-1})\right) \\ & \phantom{ = }
%		\exp\left(\mathcal{D}^{(h)}(\beta^{-1})\right)
%		\exp \left(\sum_{j=1}^{\infty} \frac{\tilde p_j\alpha_{-j}}{j \beta}  \right) }
%	\end{align*}
%\end{example}

%%%%%%%%%%%%%%%%%%%%%
%%%%%%%%%%%%%%%%%%%%%
%%%%%%%%%%%%%%%%%%%%%
%%%%%%%%%%%%%%%%%%%%%
%%%%%%%%%%%%%%%%%%%%%
%\newpage

\section{Monotone Hurwitz numbers}

\label{sec:MonotoneHurwitz}

In this Section we discuss the monotone (orbifold) Hurwitz numbers (see Example~\ref{example:monotone} above) from different points of view. 

\subsection{HCIZ matrix integral and basis vectors}

According to~\cite{GouldenGuayNovak2014} the generating function of double monotone Hurwitz numbers is described by the Harish-Chandra--Itzykson--Zuber (HCIZ) tau-function. More precisely, let us introduce the tau-function
%\footnote{Of course, the parameter $\alpha$ is not independent and can be obtained by the rescailing of the times $t_k$ or $\tilde{t}_k$, so that we introduce it for the convenience. Do we really need it?}
\begin{equation}\label{HCIZ}
\tau_{HCIZ}\left({\bf t}, {\bf\tilde{t}} , \alpha , N\right)=\sum_{\lambda} \alpha^{|\lambda|} s_\lambda({\bf t}) s_\lambda({\bf \tilde{t}}) \prod_{i}\frac{\Gamma(N-i+1)}{\Gamma(\lambda_i+N-i+1)},
\end{equation}
so that the HCIZ matrix integral is given by the Miwa parametrization $t_i=\Tr A^i$, $\tilde t_i = \Tr B^i$, $i=1,2,\dots,$ of this tau-function
\begin{equation}\label{HCIZmm}
\int dU e^{\alpha \Tr U A U^{\dagger}B}=\tau_{HCIZ}\left({\bf t}, {\bf\tilde{t}},\alpha,N\right).
\end{equation} 
Here we assume that the $N\times N$ matrices $A$ and $B$ are diagonal, and we normalize the Haar measure on the unitary group $U(N)$ in such a way that $\int dU=1$. Up to a factor that is not relevant for our computations, HCIZ integral describes a tau-function of the two-dimensional Toda lattice~\cite{MorozovMironovSemenoff1996,ZinnJustin2002}.

The generating function of the double monotone Hurwitz numbers is given by
\begin{align}\label{doublemontau}
\tau_{mm}\left({\bf t}, {\bf\tilde{t}} \right) & =\tau_{HCIZ}\left({\bf t}, {\bf\tilde{t}} , -\beta^{-1} ,-\beta^{-1} \right) \\ \notag 
&=\sum_{\lambda}  s_\lambda({\bf t}) s_\lambda({\bf \tilde{t}}) \prod_{i=1}^{l(\lambda)}\prod_{k=0}^{\lambda_i-1}\frac{1}{1+\beta(k+i-\lambda_i)},
\end{align}
or, in terms of the semi-infinite wedge product, by
\begin{align}
\tau_{mm}\left({\bf t}, {\bf\tilde{t}} \right) & =  \cor{\exp \left(\sum_{i=1}^{\infty} t_i\alpha_i  \right) 
		\mathcal{D}^{(h)}(\beta)
		\exp \left(\sum_{i=1}^{\infty} \tilde{t}_i\alpha_{-i}  \right)}.
\end{align}
%Expanding this in $\alpha$ we obtain
%\begin{equation}
%\tau_{mm}\left({\bf t}, {\bf\tilde{t}} \right)=1+\alpha t_1\tilde{t_1}+\frac{2\alpha^2}{1-\beta^2}\left(\left(t_2\tilde{t}_2+\frac{t_1^2\tilde{t}_1^2}{4}\right)+
%\beta\, \frac{t_2\tilde{t}_1^2+\tilde{t}_2t_1^2}{2}\right)
%+O(\alpha^3)
%\end{equation}
%
%
%Considering this tau-function as a member of hypergeometric Hurwitz tau-functions \cite{AMMN} we obtain a $W$-operator representation
%\begin{equation}\label{naiveW}
%\tau_{mm}=e^{\widehat{W}^*_{mm}(\beta)}\exp\left({\sum_{k=1}^\infty (-1)^k k \alpha^k \beta^{-k} t_k \tilde{t}_k}\right)
%\end{equation}
%where
%\begin{equation}
%\widehat{W}^*_{mm}(\beta)= \widehat{Y}_{\log\frac{\Gamma(-\beta^{-1})}{\Gamma(D-\beta^{-1})}}\\
%=\left(\log(\beta)+\pi i\right)\widehat{L}_0+\frac{\beta}{2}\widehat{M}_0+\frac{\beta^2}{3!}\widehat{N}_0+O(\beta^4)
%\end{equation}
%We can act by $\widehat{L}_0$ on the bare tau-function, so that (\ref{naiveW}) reduces to
%\begin{equation}
%\tau_{mm}=e^{\widehat{W}_{mm}(\beta)
%}e^{\sum_{k=1}^\infty k \alpha^k  t_k \tilde{t}_k}
%\end{equation}
%where
%\begin{equation}
%\widehat{W}_{mm}(\beta)=\frac{\beta}{2}\widehat{M}_0+\frac{\beta^2}{3!}\widehat{N}_0+O(\beta^4)
%\end{equation}
%and it is a deformation of the tau-function (\ref{doubletau}).
%
\begin{proposition}
We can choose basis vectors of the KP hierarchy with respect to the set of times $\mathbf{t}$ in the following way:
\begin{equation}\label{opre}
\Phi_k^{mm}(x)=G_{mm}(k)\, e^{\sum_{m=1}^\infty   \tilde{t}_m x^ m } x^{1-k},
\end{equation}
where
\begin{equation}
G_{mm}(k)=\frac{\Gamma(1-k-\beta^{-1})(-\beta)^{1-k-D}}{\Gamma(D-\beta^{-1})}, \quad D=x\frac{\p}{\p x}.
\end{equation}
\end{proposition}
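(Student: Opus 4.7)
The plan is to expand the wedge product $\Phi_1^{mm}\wedge\Phi_2^{mm}\wedge\cdots$ directly in the Schur basis $\{v_\lambda\}$ of the semi-infinite wedge space and compare with the Schur expansion \eqref{doublemontau} of $\tau_{mm}$.

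First I would extract the coefficients of $\Phi_k^{mm}(x)$ as a power series in $x$. Since $G_{mm}(k)$ is a formal series in $D=x\partial/\partial x$, it is diagonal on monomials,
\begin{equation*}
G_{mm}(k)\,x^n \;=\; \frac{\Gamma(1-k-\beta^{-1})\,(-\beta)^{1-k-n}}{\Gamma(n-\beta^{-1})}\,x^n,
\end{equation*}
and the expansion $e^{\sum_{m\geq 1}\tilde t_m x^m}\,x^{1-k}=\sum_{n\geq 1-k}s_{(n+k-1)}(\tilde{\mathbf t})\,x^n$ at once gives a closed formula for the coefficient of $x^n$ in $\Phi_k^{mm}(x)$. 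For $n=1-k$ one has $s_{(0)}=1$ and the prefactor is $1$, so the leading term is $x^{1-k}$ and each $\Phi_k^{mm}$ has the correct shape for a basis vector of the Sato Grassmannian.

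Next I would compute the Pl\"ucker coefficient $A_\lambda$ of $v_\lambda$ in the wedge product, given by the (formally infinite) determinant $\det(\Phi^{mm}_{i,\,\lambda_j+1-j})_{i,j\geq 1}$. Pulling out the $i$-dependent row factor $\Gamma(1-i-\beta^{-1})(-\beta)^{-i}$ and the $j$-dependent column factor $(-\beta)^{j-\lambda_j}/\Gamma(\lambda_j+1-j-\beta^{-1})$ leaves the Jacobi--Trudi matrix $\bigl(s_{(\lambda_j+i-j)}(\tilde{\mathbf t})\bigr)_{i,j}$, whose determinant is $s_\lambda(\tilde{\mathbf t})$. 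The row factors are $\lambda$-independent and absorbed into an overall normalization. Writing the column contribution at general $\lambda$ as its value at $\lambda=\emptyset$ times
\begin{equation*}
(-\beta)^{-\lambda_j}\,\frac{\Gamma(1-j-\beta^{-1})}{\Gamma(\lambda_j+1-j-\beta^{-1})}\;=\;\prod_{k=0}^{\lambda_j-1}\frac{1}{1+\beta(j-k-1)},
\end{equation*}
and reindexing $m=j-k-1$ identifies the remaining product with the content factor $\prod_{j}\prod_{k=0}^{\lambda_j-1}\bigl(1+\beta(k+j-\lambda_j)\bigr)^{-1}$ that appears in \eqref{doublemontau}.

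The boson--fermion correspondence $v_\lambda\mapsto s_\lambda(\mathbf t)$ then turns $\sum_\lambda A_\lambda v_\lambda$ into $\tau_{mm}(\mathbf t,\tilde{\mathbf t})$ up to an overall constant independent of the KP times $\mathbf t$, which is the desired conclusion. The main obstacle is bookkeeping: the factored row product and the $\lambda=\emptyset$ column product are each formally infinite, so the argument really produces the ratio $A_\lambda/A_\emptyset$ in closed form, with $A_\emptyset$ playing the role of the (regularized) normalization of the Sato-Grassmannian point. This is standard for KP tau-functions, which are defined up to multiplication by a scalar, and can be made precise by truncating to the first $N$ basis vectors and taking the stable limit as $N\to\infty$.
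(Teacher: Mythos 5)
Your proof is correct. Note that the paper actually states this proposition \emph{without} proof: the only supporting material in the text is the remark verifying that the leading coefficient of $\Phi_k^{mm}$ is $x^{1-k}$ (via the shift identity for $f(\partial_a)e^a$) and the subsequent explicit series expansion of $\Phi_k^{mm}$, so your Pl\"ucker-coordinate computation is precisely the argument the authors leave implicit. The steps all check: $G_{mm}(k)$ is diagonal on monomials with the eigenvalue you state, and at $n=1-k$ it equals $1$, so each $\Phi_k^{mm}=x^{1-k}+O(x^{2-k})$ as required; after factoring the matrix entry $\Phi^{mm}_{i,\lambda_j+1-j}$ into a $\lambda$-independent row factor, a column factor, and the Jacobi--Trudi entry $s_{(\lambda_j+i-j)}(\tilde{\mathbf t})$, the determinant produces $s_\lambda(\tilde{\mathbf t})$; and the column ratio
\begin{equation*}
(-\beta)^{-\lambda_j}\,\frac{\Gamma(1-j-\beta^{-1})}{\Gamma(\lambda_j+1-j-\beta^{-1})}
=\prod_{k=0}^{\lambda_j-1}\frac{1}{1+\beta(j-k-1)}
=\prod_{k=0}^{\lambda_j-1}\frac{1}{1+\beta(k+j-\lambda_j)}
\end{equation*}
(the last equality being the reversal $k\mapsto\lambda_j-1-k$ of the inner product) reproduces exactly the content factor in the Schur expansion \eqref{doublemontau} of $\tau_{mm}$. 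The caveats you flag yourself are the right ones and are harmless: the row product and the $\lambda=\emptyset$ column product are formally infinite, so only the ratio $A_\lambda/A_\emptyset$ is meaningful, which suffices because a point of the Sato Grassmannian (and hence the tau-function) is defined up to an overall scalar, and the infinite determinant stabilizes under truncation to the first $N$ basis vectors since $\Phi^{mm}_{i,1-i}=1$ and $\Phi^{mm}_{i,n}=0$ for $n<1-i$.
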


%is a group element of the group ???.
\begin{remark}
To specify the asymptotic we use the operator identity
\begin{equation}
f\left(\frac{\p}{\p a}\right) e^{a}=e^{a} f\left(\frac{\p}{\p a}+1\right)
\end{equation}
valid for arbitrary function $f$, so that for the leading coefficient of the series (\ref{opre}) we have
\begin{equation}
G_{mm}(k) x^{1-k}= x^{1-k}\frac{\Gamma(1-k-\beta^{-1})(-\beta)^{-D}}{\Gamma(1-k+D-\beta^{-1})} =x^{1-k}.
\end{equation}	
\end{remark}
We have
\begin{equation}
\Phi_k^{mm}(x)=\sum_{j=0}^\infty \frac{x^{j+1-k}}{\hbar^j j!} s_j({\bf t})\prod_{l=1}^j \frac{1}{1-\hbar(l-k)}.
\end{equation}
%
%An integral representation of the reciprocial gamma-function
%\begin{equation}
%\frac{1}{\Gamma(z)}=\frac{i}{2\pi}\oint_C(-t)^{-z}e^{-t}dt
%\end{equation}
%where the contour $C$ is ???, allows us to find an integral representation of the basis vectors
%\begin{equation}
%\Phi_k^{mm}(z)=(-\beta)^{1-k}\Gamma(1-k-\beta^{-1})\frac{i}{2\pi}\oint_C(-t)^{\beta^{-1}}e^{-t}
%e^{\sum_{n=1}^\infty\tilde{t}_n \left(\frac{\alpha z}{\beta t}\right)^n} \left(\frac{\alpha z}{\beta t}\right)^{1-k}dt
%\end{equation}

%Tau-functions for usual and monomial Hurwits numbers ((\ref{doubletau}) and (\ref{doublemontau})) are related by the $GL(\infty)$ group element 
%\begin{equation}
%\tau_{mm}\left({\bf t}, {\bf\tilde{t}} \right)=e^{\widehat{W}}\tau_{HH}\left({\bf t}, {\bf\tilde{t}} \right)
%\end{equation}
%where
%\begin{equation}
%\widehat{W}=\widehat{Y}_{\log\frac{\Gamma(-\beta^{-1})}{\Gamma(D-\beta^{-1})}-D\log(-\beta)-\beta D^2}=\frac{\beta^2}{3!}\widehat{N}_0+\dots
%\end{equation}

%\newpage

\subsection{Quantum curve from KS operators} \label{sec:QuantumKSmonotone}
We construct the KS operators by conjugation:
\begin{align}\label{qscmm}
a_{mm}=G_{mm}\, a_\bullet\,G_{mm}^{-1}%\\
%=\frac{(-\beta)^{-D}}{\Gamma(D-\beta^{-1})} 
%\left(\sum_{k=1}^\infty k \tilde{t}_k z^ k-D\right)\,(-\beta)^{D}\Gamma(D-\beta^{-1})
& =\sum_{k=1}^\infty k \tilde{t}_k x^k \frac{(-\beta)^{-k}\Gamma(D-\beta^{-1})}{\Gamma(D+k-\beta^{-1})} -D\\ \notag
& =\sum_{k=1}^\infty k \tilde{t}_k x^k\prod_{j=0}^{k-1}\frac{1}{1-\beta(D+j)} -D.
%\\
%b_{mm}=G_{mm}\, b_\bullet\,G_{mm}^{-1}
%=z^{-1} \frac{(-\beta) \Gamma(D-\beta^{-1})}{\Gamma(D-1-\beta^{-1})}=z^{-1}(1-\beta(D-1))
\end{align}
It follows from Equation~\eqref{trivac} that this operator annihilates the first basis vector. This implies that the wave function given by
\begin{align*}
\Psi^{mm}(x,\hbar)
& =\Phi_1^{mm}(x)\big|_{\tilde{t}_k\mapsto\tilde{t}_k/\hbar}
\\
& =\frac{\Gamma(-\hbar^{-1})(-\hbar)^{-D}}{\Gamma(D-\hbar^{-1})} e^{\frac{1}{\hbar}\sum_{k=1}^\infty   \tilde{t}_k x^k }
\end{align*}
is annihilated by the operator $A_{mm}$, where 
\begin{equation} \label{eq:FirstAmmForMonotone}
A_{mm}:=\sum_{k=1}^\infty k \tilde{t}_k x^k \prod_{j=0}^{k-1}\frac{1}{1-\hbar(D+j)} -\hbar D.
\end{equation} 
We call the operator $A_{mm}$ a general quantum curve. %It coincides with (\ref{qschh}) up to $O(\beta^2)$ as it should be. 

If $\tilde{t}_k=0$ for all $k>l$ with some finite $l$, then the quantum curve can be reduced to a polynomial one:
\begin{align*}
A_{mm}&=\sum_{k=1}^l k \tilde{t}_k\left( \prod_{j=1}^{k}\frac{1}{1-\hbar(D-j)} \right)x^k -\hbar D
\\
& =\left(\prod_{j=1}^{l}\frac{1}{1-\hbar(D-j)}\right)\tilde{A}_{mm},
\end{align*} 
where
\begin{equation*}%\label{qscnc}
\tilde{A}_{mm}:=\sum_{k=1}^l k \tilde{t}_k x^k \prod_{j=1}^{l-k}\left(1-\hbar(D-j)\right)-\hbar D\prod_{j=1}^{l}\left(1-\hbar(D-j)\right)
\end{equation*}
also annihilates the wave function:
\begin{equation}
\tilde{A}_{mm}\Psi^{mm}(x,\hbar)=0.
\end{equation}
Introducing the operators $\hat{x}=x\cdot$,  $\hat{y}=-\hbar \frac{\p}{\p x}$, we obtain
\begin{equation}\label{qsccan}
\tilde{A}_{mm}=\sum_{k=1}^l k \tilde{t}_k \hat{x}^k \prod_{j=1}^{l-k}\left(1+\hat{x}\hat{y}+\hbar j\right)+\hat{x}\hat{y}\prod_{j=1}^{l}\left(1+\hat{x}\hat{y}+\hbar j\right).
\end{equation}
Further specializations of this formula imply the following proposition:

\begin{proposition} The quantum curve for the monotone $r$-orbifold Hurwitz numbers is equal to 
\begin{equation}
\hat{x}\left(\hat{x}^{r-1}+\prod_{j=1}^{r}\left(1+\hat{x}\hat{y}+\hbar (j-1)\right)\hat{y}\right).
\end{equation}
In particular, for $r=1$, it reduces to
\begin{equation}
\hat{x}(\hat{x}\hat{y}^2+\hat{y}+1).
\end{equation}
\end{proposition}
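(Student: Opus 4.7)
The plan is to specialize the general quantum curve \eqref{qsccan} to the monotone $r$-orbifold case. First, I would read off the correct specialization from Example~\ref{example:monotone}: the right-moving factor $\exp(\alpha_{-r}/(\hbar r))$ forces, in the Toda tau-function, $\tilde t_r = 1/(\hbar r)$ and all other $\tilde t_k = 0$. After the rescaling $\tilde t_k \mapsto \tilde t_k/\hbar$ used to define the wave function $\Psi^{mm}$ in Section~\ref{sec:QuantumKSmonotone}, the variable $\tilde t_r$ appearing in $A_{mm}$ (and in $\tilde A_{mm}$) takes the value $\tilde t_r = 1/r$, so that $k\tilde t_k = \delta_{k,r}$ and the highest nonzero index is $l = r$.

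Plugging this into \eqref{qsccan}, the first sum collapses to a single term with $k=r$, and its product $\prod_{j=1}^{l-k} = \prod_{j=1}^{0}$ is empty, giving
\[
\tilde A_{mm} \;=\; \hat x^{r} \;+\; \hat x\hat y\prod_{j=1}^{r}\bigl(1+\hat x\hat y+\hbar j\bigr).
\]
To bring this into the advertised form, I would commute a single $\hat y$ leftwards through the entire product. The Heisenberg relation $[\hat x,\hat y]=\hbar$ yields the one-step identity
\[
\bigl(1+\hat x\hat y+\hbar(j-1)\bigr)\hat y \;=\; \hat y\bigl(1+\hat x\hat y+\hbar j\bigr),
\]
(verified by expanding both sides via $\hat y\hat x = \hat x\hat y-\hbar$). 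Iterating this $r$ times — each step shifts the constant in one factor up by $\hbar$ — produces
\[
\hat y\prod_{j=1}^{r}\bigl(1+\hat x\hat y+\hbar j\bigr) \;=\; \prod_{j=1}^{r}\bigl(1+\hat x\hat y+\hbar(j-1)\bigr)\hat y,
\]
and factoring the common $\hat x$ on the left delivers exactly $\hat x\bigl(\hat x^{r-1}+\prod_{j=1}^{r}(1+\hat x\hat y+\hbar(j-1))\hat y\bigr)$. The $r=1$ specialization is then a one-line expansion: $\hat x(1+(1+\hat x\hat y)\hat y) = \hat x(\hat x\hat y^{2}+\hat y+1)$.

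The only thing requiring attention — hardly an obstacle — is the bookkeeping of the $\hbar$-rescaling that relates the tau-function's $\tilde t_r = 1/(\hbar r)$ to the wave function's $\tilde t_r = 1/r$, together with the index shift $j \mapsto j-1$ that records how many times $\hat y$ has been pushed past a factor of the product. No deeper input is needed: the proposition follows from \eqref{qsccan} by specialization and a single elementary commutation identity.
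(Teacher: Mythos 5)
Your proposal is correct and follows essentially the same route as the paper: specialize $\tilde t_k=\delta_{k,r}/r$ in the polynomial quantum curve \eqref{qsccan}, observe that the first sum collapses to $\hat x^r$, and factor out $\hat x$ after pushing $\hat y$ through the product via $(1+\hat x\hat y+\hbar(j-1))\hat y=\hat y(1+\hat x\hat y+\hbar j)$. The only difference is that you make explicit the commutation identity and the $\hbar$-rescaling bookkeeping that the paper leaves implicit, both of which check out.
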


\begin{remark}
These expressions, up to a factor $\hat{x}$, coincide with the quantum curves obtained in~\cite{DoKarev2015,DoDyerMathews2014}.
\end{remark}

\begin{proof} Monotone $r$-orbifold Hurwitz numbers correspond to the specialization
\begin{equation}
\tilde{t}_k=\frac{\delta_{k,r}}{r}.
\end{equation}
In this case the quantum spectral curve (\ref{qsccan}) reduces to
\begin{align}
A&=\hat{x}^r+\hat{x}\hat{y}\prod_{j=1}^{r}\left(1+\hat{x}\hat{y}+\hbar j\right)
\\ \notag
& =\hat{x}\left(\hat{x}^{r-1}+\prod_{j=1}^{r}\left(1+\hat{x}\hat{y}+\hbar (j-1)\right)\hat{y}\right)
\end{align}
\end{proof}

\subsection{Linear equations for the tau-function} In this Section we derive some linear equations for the tau-function of double monotone Hurwitz numbers $\tau_{mm}(\mathbf{t},\tilde{\mathbf{t}})$.

Recall that the boson-fermion correspondence allows us to translate the operators in $w_{1+\infty}$ into the differential operators in the variables $\mathbf{t}$ in $\widehat{\mathfrak{gl}}_\infty$. The general formula reads:
\begin{equation}\label{obdifgen}
\left(xD \right)^m x^{k} \mapsto \widehat{Y}_{\left(xD \right)^m x^{k}}:=\Res_{x=0}\left(x^{-k}\normordboson \frac{(\widehat{J}(x)+\p_x)^{m}}{m+1} \widehat{J}(x)\normordboson \right) dx,
\end{equation}
where the operators $\left(xD\right)^m x^{k}$, $m\geq 0$, $k\in\mathbb{Z}$, span $w_{1+\infty}$, $D=x\frac{\p}{\p x}$. We refer to~\cite{AlexandrovTauFunctions2015} for a detailed exposition of this correspondence. 

\begin{remark} Note that the operator $\widehat{Y}_a$ is a finite-order differential operator if and only if $a\in w_{1+\infty}$ is a differential operator, that is a polynomial in $D$.
\end{remark}

\begin{proposition} The tau-function $\tau_{mm}(\mathbf{t},\tilde{\mathbf{t}})$ satisfies the following linear identities:
\begin{equation}
\widehat R_n \tau_{mm}(\mathbf{t},\tilde{\mathbf{t}}) = n\tilde t_n \tau_{mm}(\mathbf{t},\tilde{\mathbf{t}}),\quad n=1,2,\dots.
\end{equation}
where
\begin{equation}
\widehat R_n := \sum_{k=0}^n (-\beta)^{k} \sum_{1\leq i_1 < i_2 < \cdots < i_k \leq n} \widehat Y_{x^{-n}(D-i_1)\cdots (D-i_k)}.
\end{equation}
Moreover, this system of identities determines the tau-function uniquely up to a constant factor. 
\end{proposition}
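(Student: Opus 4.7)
The plan is to use the $2D$ Toda description $\tau_{mm}(\mathbf{t},\tilde{\mathbf{t}})=\cor{e^{H(\mathbf{t})}\,g\,e^{\tilde H(\tilde{\mathbf{t}})}}$ combined with the Kac--Schwarz analysis of Section~\ref{sec:QuantumKSmonotone}. Here $g$ is the group element corresponding to the operator $G_{mm}$ from~\eqref{opre}, with $H(\mathbf{t})=\sum_i t_i\alpha_i$ and $\tilde H(\tilde{\mathbf{t}})=\sum_i \tilde t_i\alpha_{-i}$. By linearity of the boson-fermion correspondence~\eqref{obdifgen}, $\widehat R_n=\widehat Y_{r_n}$ where
\[
r_n := x^{-n}\prod_{j=1}^n\bigl(1-\beta(D-j)\bigr)\in w_{1+\infty}.
\]
The key observation is that $r_n$ equals the $w_{1+\infty}$ conjugate $G_{mm}\cdot x^{-n}\cdot G_{mm}^{-1}$, where $x^{-n}$ is the $w_{1+\infty}$ avatar of the Heisenberg operator $\alpha_n$: using the commutation $f(D)\,x^{-n}=x^{-n}f(D-n)$ together with $(-\beta)^D\,x^{-n}\,(-\beta)^{-D}=(-\beta)^{-n}x^{-n}$, the conjugation reduces to the elementary identity $\Gamma(D-\beta^{-1})/\Gamma(D-n-\beta^{-1})=\prod_{j=1}^n(D-j-\beta^{-1})=(-\beta^{-1})^n\prod_{j=1}^n(1-\beta(D-j))$, while the $k$-dependent scalar factor $\Gamma(1-k-\beta^{-1})$ in $G_{mm}(k)$ cancels.

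Granting this identity --- and the standard fact that the $w_{1+\infty}$ conjugation promotes to a Fock-space conjugation of fermionic operators up to central terms that do not affect the computation --- existence follows from the Heisenberg commutation $\alpha_n\,e^{\tilde H}|0\rangle=n\tilde t_n\,e^{\tilde H}|0\rangle$ (for $n\geq 1$):
\[
\widehat R_n\,\tau_{mm}=\cor{e^H\,(g\alpha_n g^{-1})\,g\,e^{\tilde H}}=\cor{e^H\,g\,\alpha_n\,e^{\tilde H}}=n\tilde t_n\,\tau_{mm}.
\]

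For uniqueness, the $\beta^0$-term of $r_n$ is $x^{-n}$, whose $\widehat Y$-image is $\widehat J_n=\partial/\partial t_n$. Hence each identity has the form $\partial\tau_{mm}/\partial t_n=n\tilde t_n\,\tau_{mm}+\beta\cdot(\text{higher-order differential operator})\,\tau_{mm}$, and together with the evident normalization $\tau_{mm}(\mathbf{t}=0,\tilde{\mathbf{t}})=1$ (immediate from~\eqref{doublemontau}, since only $\lambda=\emptyset$ contributes when all $t_k$ vanish), these equations determine $\log\tau_{mm}$ recursively in powers of $\beta$: at $\beta=0$ one recovers $\log\tau_\bullet=\sum_n n t_n\tilde t_n$ matching~\eqref{simp}, and higher orders follow iteratively. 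This proves uniqueness up to an overall multiplicative constant.

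The main obstacle will be controlling the central-charge contribution when lifting the $w_{1+\infty}$ conjugation identity $G_{mm}\cdot x^{-n}\cdot G_{mm}^{-1}=r_n$ to the corresponding Fock-space identity for $g\alpha_n g^{-1}$, which is what underpins the reduction to the Heisenberg action; in practice one expects the central cocycle to vanish here because $g$ acts diagonally on the Schur basis.
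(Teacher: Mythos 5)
Your proof is correct, and its core is the same as the paper's: both identify $\widehat R_n$ with $\widehat Y_{r_n}$ for $r_n=x^{-n}\prod_{j=1}^n(1-\beta(D-j))=G_{mm}\,x^{-n}\,G_{mm}^{-1}$ (the paper writes this as $b_{mm}^n$ with $b_{mm}=G_{mm}\,b_\bullet\,G_{mm}^{-1}$, $b_\bullet=x^{-1}$), and both reduce the eigenvalue to the elementary statement for $\tau_\bullet$. You diverge in two minor ways. First, you compute the eigenvalue by conjugating inside the Fock space and using $\alpha_n e^{\tilde H}|0\rangle=n\tilde t_n e^{\tilde H}|0\rangle$, whereas the paper conjugates the differential equation $\partial_{t_n}\tau_\bullet=n\tilde t_n\tau_\bullet$ by the group element $\mathcal{D}^{(h)}(\beta)$ viewed as an operator in $\mathbf{t}$; these are the same computation in two languages. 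The central-term worry you flag at the end is genuinely a non-issue, and for a cleaner reason than the one you give: $r_n$ (and $x^{-n}$) is strictly off-diagonal in the energy grading, so its fermionic bilinear involves only $\normordfermion\psi_i\psi_j^*\normordfermion$ with $i\neq j$, where normal ordering is vacuous and conjugation by the diagonal element $g$ is exact with no cocycle. Second, for uniqueness the paper simply transports the uniqueness statement for the system $\partial_{t_n}\tau_\bullet=n\tilde t_n\tau_\bullet$ through the (invertible) conjugation by $\mathcal{D}^{(h)}(\beta)$, while you argue by induction on powers of $\beta$, using that the $\beta^0$-part of $\widehat R_n$ is $\partial/\partial t_n$; your argument is valid (at each order the ambiguity is a multiple of $\tau_\bullet$, so the full solution space is one-dimensional over constants in $\mathbf{t}$), and arguably more self-contained, at the cost of working with formal $\beta$-expansions. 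Both routes prove exactly the stated claim.
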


\begin{proof} Consider the KS operator
\begin{align*}
 b_{mm}& :=G_{mm}\, b_\bullet\,G_{mm}^{-1} =x^{-1} \frac{(-\beta) \Gamma(D-\beta^{-1})}{\Gamma(D-1-\beta^{-1})}
=x^{-1}(1-\beta(D-1)).
\end{align*}
Using that $Dx^{-1}=x^{-1}(D-1)$, we have:
\begin{align*}
b^n_{mm}& =x^{-n}(1-\beta(D-1))(1-\beta(D-2))\cdots (1-\beta(D-n)).
\end{align*}
Hence we have that $\widehat R_n= \widehat{Y}_{b_{mm}^n}$, $n=1,2,\dots$.
Since the operators $b^n_{mm}$ are polynomial in $D$ and preserve $\{\Phi^{mm}\}$, 
the corresponding differential operators are finite degree operators in $\mathbf{t}$ that satisfy 
\begin{equation}\label{eq:constainsmm}
\widehat{Y}_{b^n_{mm}} \tau_{mm} = c_n(\tilde{\mathbf{t}}) \tau_{mm}.
%=\frac{\p}{\p t_1}-\beta \widehat{L}_{1}
\end{equation}
We have to determine the coefficients $c_n(\tilde{\mathbf{t}})$, $n=1,2,\dots$. 

Note that Equation~\eqref{eq:constainsmm} is obtained by conjugation with $\mathcal{D}^{(\sigma)}$, where $\mathcal{D}^{(\sigma)}$ is now considered also as a differential operator in $\mathbf{t}$, of the following equation for $\tau_\bullet(\mathbf{t},\tilde{\mathbf{t}})$:
\begin{equation}\label{eq:constrainsbullet1}
\widehat{Y}_{(b_\bullet)^n} \tau_\bullet = c_n(\tilde{\mathbf{t}}) \tau_\bullet, \quad, n=1,2,\dots.
\end{equation}
The last equation can be rewritten as  
\begin{equation}\label{eq:constrainsbullet2}
\frac{\d}{\d t_n} \tau_\bullet = n \tilde t_n \tau_\bullet, \quad, n=1,2,\dots.
\end{equation}
Thus we see that $c_n(\tilde{\mathbf{t}_n})=n\tilde t_n$, and since Equations~\eqref{eq:constrainsbullet2} determine the tau-function $\tau_\bullet$ up to a constant factor, the same is true for Equations~\eqref{eq:constainsmm} and the tau-function $\tau_{mm}$.
\end{proof}

\begin{remark} By construction, the operators $\widehat{R}_n$, $n\geq 1$, commute.
\end{remark}

\begin{example} \label{example:R1R2} Let us list the first two operators, $\widehat{R}_1$ and $\widehat{R}_2$. We have:
\begin{align*}
\widehat{R}_1 & = \frac{\p}{\p t_1}-\beta \widehat{L}_{1}, \\ 
\widehat{R}_2 & = \frac{\p}{\p t_2}-2\beta\widehat{L}_2+\beta^2 \widehat{M}_2,
\end{align*}
where
\begin{align*}
\widehat{L}_m& =\widehat{Y}_{x^{-m}\left(D-\frac{m+1}{2}\right)} =
\frac{1}{2} \sum_{a+b=m} \normordboson \widehat{J}_a \widehat{J}_b \normordboson,
\\
\widehat{M}_m& =\widehat{Y}_{x^{-m}\left(D^2-(m+1) D +\frac{(1+m)(2+m)}{6}\right)} = \frac{1}{3} \sum_{a+b+c=m} \normordboson \widehat{J}_a \widehat{J}_b \widehat{J}_c \normordboson
\end{align*}
are some standard infinitesimal symmetries of KP, see e.g.~\cite{AlexandrovTauFunctions2015}.
\end{example}

\begin{remark}\label{caj} For the tau-function of the double monotone Hurwitz numbers all possible Kac--Schwarz operators that are polynomial in $D$ are given by the polynomials of $b_{mm}$. However, for particular specializations of the parameters $\tilde{t}_k$, some other polynomial Kac--Schwarz operators can appear. In particular, for the single monotone Hurwitz numbers (Example \ref{example:monotone} with $r=1$) we have the following Kac--Schwarz operator:
\begin{equation*}
c_{mm}:=a_{mm}^2-b_{mm}^{-1}a_{mm}+(1+\hbar^{-1})a_{mm}=z-\hbar D+\hbar^2D(D-1).
\end{equation*}
The corresponding equation for the tau-function is
\begin{equation*}
\widehat{Y}_{c_{mm}}\left[\tau_{mm}({\bf t},\tilde{\mathbf{t}})|_{\tilde{t}_{k}=\hbar^{-1}\delta_{k,1}, k\geq 1}\right]=0,
\end{equation*}
where
\begin{equation*}
\widehat{Y}_{c_{mm}}=\hbar^2\widehat{M}_0-\hbar\widehat{L}_0+t_1,
\end{equation*}
is equivalent to the cut-and-join equation of \cite{GouldenGuayNovak2011-2}. Similar  operators can be easily found for higher $r$.
\end{remark}

\subsection{ELSV-type formula} We denote by $h_{g,\mu}^{\le} $ the monotone Hurwitz numbers for the connected covering surface of genus $g$. The generating function for these numbers is the logarithm of the one we have in Example~\ref{example:monotone} for $r=1$. 
The following quasi-polynomiality property is proved in~\cite{GouldenGuayNovak2011-2,GouldenPol}:
\begin{equation}
	h_{g,\mu}^{\le} = \prod_{i=1}^n \binom{2\mu_i}{\mu_i} P^{\le}_{g,n}(\mu_1,\dots,\mu_{\ell(\mu)})
\end{equation} 
	for some polynomial $P^{\le}_{g,n}$. Based on this formula the authors conjectured that there should be an ELSV-type formula for these numbers.

The topological recursion for these numbers is proved in~\cite{DoDyerMathews2014}. They prove that the expansions of the correlation differentials of the curve $x=-(y+1)/y^2$  are given by 
\begin{equation}\label{eq:monotonespectralcurve}
		\omega^{\leq}_{g,n}(z_1, \dots, z_n) =  d_1 \cdots d_n \, \sum_{\vec\mu\in (\N^\times)^n} h^{ \le}_{g,\mu} \prod_{i=1}^{\ell(\mu)} x_i^{\mu_i}.
\end{equation}
\begin{remark}
Here by $\vec{\mu}$ we denote a vector, that is, we don't assume that $\mu_1\geq \cdots\geq \mu_n$. We denote by $\mu$ the partition of length $n$ whose parts are the ordered components of the vector $\vec{\mu}$.
\end{remark}

This is sufficient to prove the following:

\begin{proposition}\label{elsvmon}We have:
	\begin{align} \label{eq:statementMonELSV}
	h_{g,\mu}^{\le}&  =  \prod_{i=1}^{\ell(\mu)} \binom{2\mu_i}{\mu_i} 
	\int_{\overline{\mathcal{M}}_{g,\ell(\mu)}} 
	 e^{{\sum_{l=1} K_l \kappa_l}}
	\prod_{j=1}^{\ell(\mu)} \sum_{d_j \geq 0} \psi_j^{d_j} \frac{(2(\mu_j + d_j) - 1)!!}{(2 \mu_j - 1)!!}.
	\end{align}
Here the coefficients $K_i$, $i=1,2,\dots$, satisfy the following equation:
	\begin{equation}\label{condK}
	\exp \left(-\sum_{l=1}^\infty K_l U^l \right)  = \sum_{k=0}^{\infty} (2k + 1)!! U^k.
	\end{equation}
\end{proposition}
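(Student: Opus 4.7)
The plan is to derive \eqref{eq:statementMonELSV} by invoking the DOSS formula of \cite{DOSS}, which packages spectral-curve topological recursion correlators as intersection numbers on $\overline{\mathcal{M}}_{g,n}$. Since \cite{DoDyerMathews2014} establishes the topological recursion for $\omega^{\le}_{g,n}$ on the spectral curve $x = -(y+1)/y^2$ via \eqref{eq:monotonespectralcurve}, every ingredient required by the DOSS recipe is already in place.

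First I would compute the local data of the curve at its unique critical point. Since $dx/dy = (y+2)/y^3$, the branch point sits at $y_0 = -2$ with critical value $x_0 = 1/4$. I introduce a local coordinate $w$ on the curve by a normalization of $x_0 - x = w^2/2$ and expand $y(w) = -2 + \sum_{k\ge 1} c_k\,w^k$ by inversion of the defining relation; these Taylor coefficients are the entire input to the DOSS formula.

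Next I would extract the $\psi$-class contribution at each marked point. By \eqref{eq:monotonespectralcurve}, $h^{\le}_{g,\mu}$ is the $\prod x_j^{\mu_j}$ coefficient of $\prod (dx_j)^{-1}\,\omega^{\le}_{g,n}$; applying the DOSS contour integral and changing variable to $w_j$ converts this into an integral against a generating series in $\psi_j$. The residue at $w_j=0$ of $x_j^{-\mu_j-1}$ produces the leading $\binom{2\mu_j}{\mu_j}$ via the standard expansion of $\bigl(-(y+1)/y^2\bigr)^{-\mu_j}$ in a neighborhood of $\infty$, while the subleading terms in the Taylor expansion of $y(w_j)$ organize themselves into the ratio $(2(\mu_j+d_j)-1)!!/(2\mu_j-1)!!$ multiplying $\psi_j^{d_j}$, exactly as in the analogous calculations of \cite{DuninLewanski2015,ShadrinSpitzZvonkine2015}.

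Finally, I would compute the $\kappa$-class exponential. In the DOSS formula the $R$-matrix action at the critical point appears as an operator on the auxiliary cohomological field theory; using the Mumford-type trick already exploited in \cite{DSS,DOSS}, this insertion can be rewritten as $\exp(\sum_l K_l \kappa_l)$ acting on the trivial CohFT, and the coefficients $K_l$ are read off from a Laplace transform of a specific one-form on the spectral curve near $y_0$. Performing this Laplace transform on $x = -(y+1)/y^2$ produces a formal series in a single variable $U$, and one checks that its inverse (in the sense of \eqref{condK}) equals $\sum_{k\ge 0}(2k+1)!!\,U^k$; this last combinatorial identity is the key place where the specific double-factorial generating function appears. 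The main obstacle is precisely this Step: the conceptual framework is routine DOSS, but explicitly evaluating the Laplace transform on this curve and recognizing the result in the closed form demanded by \eqref{condK}, together with carefully tracking normalization conventions between \cite{DoDyerMathews2014} and \cite{DOSS}, is where the real work sits.
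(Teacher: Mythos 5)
Your proposal follows essentially the same route as the paper's own proof: both invoke the topological recursion of Do--Dyer--Mathews for the curve $x=-(y+1)/y^2$, apply the Eynard/DOSS graph-sum formula at the unique critical point $y=-2$ with local coordinate $\zeta=\sqrt{x-x_{cr}}$, extract the $\binom{2\mu_j}{\mu_j}$ and the double-factorial $\psi$-class ratios from the local expansion of $y$, and repackage the remaining vertex data as $\exp(\sum_l K_l\kappa_l)$ with the $K_l$ determined by $\exp(-\sum_l K_l U^l)=\sum_k(2k+1)!!\,U^k$. The only substantive detail you leave implicit that the paper uses is that $\zeta$ is a global coordinate on the sphere, so the Bergman kernel has no regular part at the branch point and only single-vertex stable graphs survive, which is what makes the closed form \eqref{eq:statementMonELSV} come out so cleanly.
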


\begin{remark}The cohomological field theory in this formula is given by the class $\exp\left({\sum_{l=1} K_l \kappa_l}\right)$ . This type of cohomological field theories of rank 1 with a non-flat unit is considered in detail in~\cite{ManinZograf2000}.
\end{remark}

%For convenience, we recall the definition of the complementary error function and its scaled version (these functions are used in statistics):
%\begin{definition}
%	The special function $\Erfc(X)$, called \emph{complementary error function}, has the following definition and asymptotic expansion:
%	\begin{equation}\label{erfc}
%	\Erfc(X) := \frac{2}{\sqrt{\pi}} \int_X^{\infty} e^{-t^2}dt  \sim \frac{e^{-X^2}}{X  \sqrt{\pi}} \sum_{k \geq 0 } (-1)^k \frac{(2k - 1)!!}{(2X^2)^k} 
%	\end{equation}
%The \emph{scaled complementary error function} $\Erfcx(X)$ is defined as
%\begin{equation}
%\Erfcx(X):=e^{X^2} \Erfc(X).
%\end{equation}
%\end{definition}
	
\begin{proof}[Proof]
		First note that $x(y)$ has a single critical point $y_{cr} = -2$ with the critical value  $x_{cr}:= x(y_{cr})=1/4$. The local coordinate $\zeta$ around $y_{cr}$ and its inverse read
		$$ \zeta := \sqrt{x - x_{crit}} = i\frac{y+2}{2y};  \qquad y = \frac{2i}{2\zeta - i}.$$
We expand $y$ near $\zeta = 0$:
\begin{equation} \label{expyzeta}
y(\zeta) = \sum_{k \geq 0} s_k \zeta^k, \qquad s_{k} = i^k (-2)^{k+1}, 
\end{equation}
in particular for odd coefficients we have $s_{2k+1} = 4i (-4)^k$.

The correlation differentials $\omega_{g,n}$ produced by the topological recursion can be expressed as sums over graphs (see~\cite{EynardIntersections1,EynardInvariantsModuli2,DOSS}).
In the case when the spectral curve has a single branch point Theorem 3.3. in \cite{EynardIntersections1} gives an explicit formula for the $\omega_{g,n}$'s.
%\begin{theorem}[]
%\begin{equation}\label{1dcft}
%\omega_{g,n}(z_1, \dots, z_n) = 2^{3g-3+n} \sum_{ \sum_{i=1}^n d_i \leq 3g-3+n}  
%\Bigg{\langle} e^{\frac{1}{2} \sum_{\delta} l_{\delta_*} \hat{B}(\psi, \psi') }  
% e^{\sum_k \tilde{t}_k \kappa_k} 
% \prod_{j=1}^n \psi_j^{d_j}
% \Bigg{\rangle}
% \prod_{j=1}^n d \xi_{d_j}(z_j)
%\end{equation}
%\end{theorem}
%We suggest the original paper for notations. 
Since the local coordinate $\zeta$ is in fact a global coordinate on the sphere, the Bergman kernel is equal to 
\[
\frac{d\zeta_1d\zeta_2}{(\zeta_1-\zeta_2)^2}.
\]
This means that the Bergman kernel has trivial regular part near the critical point, and the expression in term of stable graphs simplifies sensibly since only stable graphs with a single vertex appear. It can be written as 
\begin{align}\label{asgraphs}
& \omega_{g,n}(\vec{\zeta})= 
\left(-{2 s_1} \right)^{2-2g-n}    \sum_{\vec{d} \in \mathbb{N}^n} \prod_{i=1}^n (2 d_i+1)!! \frac{d\zeta_i}{\zeta_i^{2 d_i+2}} \times 
\\ \notag
& \int_{\overline{\mathcal{M}}_{g,n}} \prod_{j=1}^n \psi_j^{d_j} \, \sum_{m=0}^\infty \frac{1}{m!}  \sum_{\vec{\alpha} \in (\mathbb{N}^\times)^m} \prod_{k=1}^m\left( -(2 \alpha_k+1)!! \frac{s_{2\alpha_k+1}}{s_1} \right)
\kappa_{{\alpha_1}, \dots, {\alpha_m}}
%
%\times \sum_{m=0}^\infty {1 \over m!}  \Bigg{\langle} \prod_{j=1}^n \tau_{d_j} \,  \sum_{\vec{\alpha} \in \mathbb{N^*}^m} \prod_{k=1}^m -(2 \alpha_k+1)!! {t_{2\alpha_k+1} \over t_1} 
%\tilde{\tau}_{\alpha_k+1} \Bigg{\rangle}_{g,n+m} 
\end{align}
(cf. \cite[Equation (3.53)]{DOSS}), where the coefficients $s_k$ are given by Equation~\eqref{expyzeta}. 
%The $\tau_{d_j}$'s on the standard leaves give rise to psi classes, while the $\tilde{\tau}_{\alpha_k +1}$ on the dilation leaves are contracted through the forgetful map $\pi:\overline{\mathcal{M}}_{g,n+m} \rightarrow \overline{\mathcal{M}}_{g,n}$ to form a linear combination of kappa classes (see \cite{FSZ}):

%$$
%\kappa_{k_{\alpha_1}, \dots, k_{\alpha_m}}:=
%\pi_*(\psi_{n+1}^{\alpha_{k_1}+1} \dots \psi_{n+m}^{\alpha_{k_m}+1}) = \Bigg{\langle}  \prod_{k=1}^m 
%\tilde{\tau}_{\alpha_k+1} \Bigg{\rangle}_{g}
%$$
%where
%\begin{equation}\label{kappacycles}
%\kappa_{s_1, \dots, s_m}= \sum_{\sigma \in \mathfrak{S}_m} \, 
%\prod_{c \;=\; \mbox{\scriptsize cycle of } \sigma} 
%\!\!\!\! \kappa_{\sum_{i \in c } s_i}
%\end{equation}
%The correlator then reads
%$$
%\int_{\overline{\mathcal{M}}_{g,n}} \prod_{j=1}^n \psi_j^{d_j} \, \sum_{m=0}^\infty {1 \over m!}  \sum_{\vec{\alpha} \in \mathbb{N^*}^m} \prod_{k=1}^m -(2 \alpha_k+1)!! {t_{2\alpha_k+1} \over t_1} 
%\kappa_{k_{\alpha_1}, \dots, k_{\alpha_m}}
%$$
%This relation (\ref{kappacycles}) implies that the i

In order to rewrite Equation~\eqref{asgraphs} as an expansion in $x_1,\dots,x_n$ near $y=-1$, we observe that
	\begin{equation}\label{eq:monotoneleaves}
	\frac{(2a + 1)!! d\zeta}{ \zeta^{2a + 2}} = 2i(-4)^a\, d\sum_{l=0}^\infty \binom{2l}{l} x^l \cdot \frac{(2(l+a)-1)!!}{(2l-1)!!}.
	\end{equation}
	Indeed, this follows from equation
		\begin{equation*}
		\frac{(2a + 1)!! d\zeta}{ \zeta^{2a+ 2}} = d\left( -\frac{d}{\zeta d\zeta} \right)^{a} (-\zeta^{-1}), 
		\end{equation*}
		and expansion 
		$$
		-\zeta^{-1} = 2i \sum_{l= 0}^{\infty} \binom{2 l}{ l} x^{l}.
		$$

The multi-index kappa classes can be written as exponent of sum of single kappa classes:
\begin{equation}\label{expk}
\sum_{m=0}^\infty \frac{1}{m!}  \sum_{\vec{\alpha} \in(\mathbb{N}^\times)^m} \prod_{k=1}^m f(\alpha_k)\kappa_{{\alpha_1}, \dots, {\alpha_m} } = \exp \left(\sum_{l = 1}^\infty K_l \kappa_l \right),
\end{equation}
where the coefficients $K_l$ can be computed by the expansion
\begin{equation}
\exp \left(- \sum_{l = 1}^\infty K_l U^l \right) = 1 - \sum_{k = 1}^\infty f(k) U^k.
\end{equation}
This implies that 
\begin{equation}\label{eq:monotonekappas}
\sum_{m=0}^\infty \frac{1}{m!}  \sum_{\vec{\alpha} \in (\mathbb{N}^\times)^m} \prod_{k=1}^m -(2 \alpha_k+1)!! \kappa_{{\alpha_1}, \dots, {\alpha_m} } = \exp\left(\sum_{l=1}^\infty K_l \kappa^l \right),
\end{equation}
where $\exp\left(-\sum_{l=1}^\infty K_l U^l \right) = \sum_{k=0}^{\infty} (2k + 1)!! U^k$.

Finally, observe that if $\sum_{k=1}^n d_k + \sum_{k=1}^m \alpha_m = 3g-3+n$, then
\begin{equation}\label{eq:thecoefficient}
(-2s_1)^{2-2g-n}\prod_{k=1}^n \left((-4)^{d_k}\cdot (2i)\right) \cdot \prod_{k=1}^m \frac{s_{2\alpha_k+1}}{s_1} = 1.
\end{equation}

%%%%%%%%%%%%%%% PROOF THM %%%%%%%%%%%%%%%%%%%%
Now we are ready to complete the proof of the proposition. 
	Note that Equation~\eqref{eq:monotonespectralcurve} implies that 
	\begin{equation}
	\sum_{\mu \in (\N^\times)^n} h^{\le}_{g,\mu} x_1^{\mu_1}\cdots x_n^{\mu_n}= \int \cdots \int \omega^{\leq}_{g,n}
	\end{equation}
	On the other hand Equations~\eqref{asgraphs}, 
  \eqref{eq:monotoneleaves}, \eqref{eq:monotonekappas}, and~\eqref{eq:thecoefficient} imply that 
\begin{align*}
	 & \int \cdots \int \omega^{\leq}_{g,n} = \\ 
	 &  \sum_{\vec{\mu} \in (\N^\times)^n} \sum_{\vec{d} \in (\N)^n}
	\int_{\overline{\mathcal{M}}_{g,n}}  e^{{\sum_{l=1} K_l \kappa_l}}
	\prod_{j=1}^{n}  \psi_j^{d_j} \frac{(2(\mu_j + d_j) - 1)!!}{(2 \mu_j - 1)!!} \binom{2\mu_j}{\mu_j} x_j^{\mu_j}
\end{align*}
for $K_l$ given by Equation~\eqref{condK}, which is equivalent to Equation~\eqref{eq:statementMonELSV}.
\end{proof}

\begin{remark} After we shared this formula with colleagues, we learned from N.~Do that he and M.~Karev derived the same formula independently, using the geometric approach to topological recursion due to M.~Kazarian. 
\end{remark}

%%%%%%%%%%%%%%%%%%%%%
%%%%%%%%%%%%%%%%%%%%%
%%%%%%%%%%%%%%%%%%%%%
%%%%%%%%%%%%%%%%%%%%%
%%%%%%%%%%%%%%%%%%%%%
%\newpage

\section{Further examples of quantum curves}
\label{sec:FurtherExamples}

\subsection{Strictly monotone orbifold Hurwitz numbers} By Proposition~\ref{prop:hypermaps-strictlymon} strictly monotone orbifold Hurwitz problem is equivalent to the enumeration of hypermaps. Its tau-function is given in Example~\ref{example:strictlymonotone}.

By principal specialization of Schur functions near infinity, the corresponding wave function is equal to
\begin{align} \label{eq:wavestrmonotone}
\Psi(x^{-1},\hbar)
& =  \sum_{n=0}^\infty \frac{x^{-rn}}{n! \hbar^n r^n} \sum_{k=0}^{\infty} \sigma_k(cr^{(rn,0,\dots,0)}) \hbar^k \\ \notag
& = \sum_{n=0}^\infty \frac{x^{-rn}}{n! \hbar^n r^n} \prod_{j=1}^{rn-1} (1 + j\hbar).
\end{align}
In order to get a curve, consistent with results of~\cite{DoManescu2014,DOPS2014}, here we consider the wave function as a series in the variable $x^{-1}$ instead of $x$.

\begin{proposition} We have: 
	$$
	\left[\hat x^{\frac 1\hbar} (\hat y^r - \hat x \hat y + 1) \hat x^{-\frac 1\hbar}\right] \Psi(x^{-1},\hbar) = 0,$$
	where $\hat x = x\cdot$ and $\hat y = -\hbar \frac{\p}{\p x}$.
\end{proposition}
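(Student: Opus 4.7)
My strategy is to reduce the statement to an elementary two-term recursion among the coefficients of $\Psi$. First, I would use that $\hat x^{1/\hbar}$ is an invertible multiplication operator to reformulate the claim equivalently as
$$(\hat y^r-\hat x\hat y+1)\,\phi = 0, \qquad \phi(x,\hbar):=x^{-1/\hbar}\Psi(x^{-1},\hbar).$$
By~\eqref{eq:wavestrmonotone},
$$\phi(x,\hbar)=\sum_{n=0}^{\infty} a_n\, x^{\alpha_n},\qquad \alpha_n:=-rn-\tfrac{1}{\hbar},\qquad a_n:=\frac{1}{n!\,\hbar^n r^n}\prod_{j=1}^{rn-1}(1+j\hbar).$$

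Next, I would apply the operator termwise. Using $\hat y\,x^\alpha=-\hbar\alpha\,x^{\alpha-1}$ and $\hat x\hat y\,x^\alpha=-\hbar\alpha\,x^\alpha$, I get
$$(\hat y^r-\hat x\hat y+1)\,x^{\alpha_n} = (-\hbar)^r\prod_{k=0}^{r-1}(\alpha_n-k)\,x^{\alpha_n-r}+(1+\hbar\alpha_n)\,x^{\alpha_n}.$$
Two observations drive the calculation: $\alpha_n-r=\alpha_{n+1}$, so after a shift $n\mapsto n-1$ the first term contributes at the same power $x^{\alpha_n}$; and $1+\hbar\alpha_0=0$, so the $n=0$ diagonal term disappears and no boundary obstruction arises. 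Factoring $\hbar$ out of each bracket via $\alpha_{n-1}-k=-\bigl(r(n-1)+k\bigr)-1/\hbar$, the vanishing of the coefficient of $x^{\alpha_n}$ collapses to the two-term recursion
$$rn\hbar\,a_n \;=\; a_{n-1}\prod_{j=r(n-1)}^{rn-1}(1+j\hbar),\qquad n\ge 1.$$

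The final step is to verify this recursion against the explicit formula for $a_n$: the ratio of the prefactors $1/(n!\hbar^n r^n)$ contributes $1/(rn\hbar)$, while the ratio $\prod_{j=1}^{rn-1}(1+j\hbar)\big/\prod_{j=1}^{r(n-1)-1}(1+j\hbar)$ telescopes to exactly the $r$-factor product on the right-hand side. Nothing here is subtle; the only point that requires attention is the bookkeeping of the conjugation by $\hat x^{1/\hbar}$ together with the index shift $n\mapsto n-1$. One may equivalently phrase the whole argument in the intrinsic language of the paper: the displayed operator is the specialization at $\tilde t_k=\delta_{k,r}/(\hbar r)$ of a Kac--Schwarz operator $a_{smo}=G_{smo}\,a_\bullet\,G_{smo}^{-1}$, and $\Psi$ is the corresponding first basis vector $\Phi^{smo}_1$ of the Sato Grassmannian, exactly parallel to the derivation of~\eqref{eq:FirstAmmForMonotone} in Section~\ref{sec:QuantumKSmonotone}; this conceptual route makes the annihilation property automatic from the stability condition~\eqref{KScond}, at the cost of producing $G_{smo}$ explicitly.
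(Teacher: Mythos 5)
Your proof is correct and follows essentially the same route as the paper's: both reduce the operator identity to the two-term recursion $rn\hbar\,a_n = a_{n-1}\prod_{j=r(n-1)}^{rn-1}(1+j\hbar)$ among the coefficients of the wave function and verify it against the explicit product formula. The only cosmetic difference is that you undo the conjugation by $\hat x^{1/\hbar}$ at the outset and absorb the $-1/\hbar$ shift into the exponents $\alpha_n$, whereas the paper keeps the recursion in operator form and performs the conjugation at the very end.
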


\begin{proof}
   Let $a_n$ be the $n$th summand in Equation~\eqref{eq:wavestrmonotone}. We have:
	\begin{equation*}
	\hbar r(n+1) a_{n+1} = x^{-r} \prod_{j=0}^{r-1}[1 + (nr + j)\hbar] a_n.
	\end{equation*}
	In terms of the operators this can be rewritten as
	\begin{equation*}
	- \hbar x\frac{\d}{\d x} a_{n+1} =  \left[ x^{-1} \left(1 - \hbar x\frac{\d}{\d x}\right)\right]^r x^r a_n.
	\end{equation*}
	Hence we obtain
	$$x^{\frac 1\hbar}\left[ \left(-\hbar \frac{\d}{\d x}\right)^r + \hbar x \frac{\d}{\d x} + 1\right] x^{-\frac 1\hbar}\, \Psi(x^{-1},\hbar) = 0.$$
\end{proof}

\begin{remark}
This quantum curve was earlier obtained in~\cite{DoManescu2014} using combinatorics of hypermaps and in~\cite{DOPS2014} using the loop equations for hypermaps. Comparison with this results also forces us to use the variable $x^{-1}$ instead of $x$.
\end{remark}

\begin{remark} Even though we presented here a purely combinatorial derivation of the quantum curve, it is worth mentioning that one can derive it for a more general double strictly monotone Hurwitz problem using the method of Section~\ref{sec:QuantumKSmonotone}. In this case the operator given by Equation~\eqref{eq:FirstAmmForMonotone} is replaced by 
\[
	\sum_{k=1}^\infty k \tilde{t}_k x^k \prod_{j=0}^{k-1}(1+\hbar(D+j)) -\hbar D,
\]
whose specialization for $\tilde{t}_k = \delta_{k,r}/r$ is equivalent to the operator above after the change of variable $x\mapsto x^{-1}$.
\end{remark}

\subsection{Blocks of atlantes} 
 We consider a Hurwitz theory given by a vector of blocks of atlantes of some fixed type, that is, the vector of blocks is equal to $\vec{\mathcal{P}}=(B_r^\times,\dots,B_r^\times)$ for some fixed $r\geq 1$. We also assume that $\nu=(1^{|\mu|})$, see Example~\ref{example:atlantes} for $q=1$. The corresponding wave function is equal to
	\begin{align}\label{eq:waveatlantes}
		\Psi(x,\hbar) &= \sum_{n=0}^\infty \frac{x^n}{n! \hbar^n} \exp( p_r(cr^{(n,0,\dots,0)}) \hbar^r ) 
		\\ \notag &
		= \sum_{n=0}^\infty \frac{x^n}{n! \hbar^n} \exp\left( \hbar^r \sum_{j=1}^{n-1} j^r \right) 
	\end{align}
\begin{proposition} We have: 
	$$[\hat{y} - \hat{x}e^{\hat{y}^r}] \Psi(x,\hbar)= 0,$$
	where
	$\hat x = x\cdot$ and $\hat{y} = \hbar x \frac{\d}{\d x}$ (it is more convenient to use the exponential coordinate in this case, cf.~\cite{MulaseShadrinSpitz2013}).
\end{proposition}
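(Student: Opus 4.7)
The plan is to verify the annihilation by a direct term-by-term computation, following the same template as the proof of the previous proposition for strictly monotone orbifold Hurwitz numbers. Write
\[
\Psi(x,\hbar) = \sum_{n=0}^\infty a_n, \qquad a_n := \frac{x^n}{n!\,\hbar^n} \exp\!\left(\hbar^r \sum_{j=1}^{n-1} j^r\right),
\]
so the task reduces to producing a three-term relation between $a_n$, $\hat y\, a_n$ and $\hat x e^{\hat y^r} a_n$, and then matching indices.

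First I would note that since $a_n$ is a monomial in $x$ times a constant, the operators act diagonally: $\hat y\, a_n = \hbar n\, a_n$, and consequently $\hat y^r a_n = (\hbar n)^r a_n$, so $e^{\hat y^r} a_n = e^{\hbar^r n^r} a_n$. Next I would compute $\hat x e^{\hat y^r} a_n = e^{\hbar^r n^r} x \cdot a_n$, and use the defining formula for $a_n$ to rewrite this in terms of $a_{n+1}$:
\[
e^{\hbar^r n^r} x\cdot a_n \;=\; \frac{x^{n+1}}{n!\,\hbar^n}\exp\!\left(\hbar^r \sum_{j=1}^{n} j^r\right) \;=\; \hbar(n+1)\, a_{n+1}.
\]
Summing the identity $\hat x e^{\hat y^r} a_n = \hbar(n+1) a_{n+1}$ over $n \geq 0$ gives $\hat x e^{\hat y^r} \Psi = \sum_{n\geq 0}\hbar(n+1) a_{n+1} = \sum_{m\geq 1}\hbar m\, a_m = \hat y\, \Psi$, since the $m=0$ term vanishes. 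This is exactly the desired equation $(\hat y - \hat x e^{\hat y^r})\Psi = 0$.

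There is essentially no obstacle: the only thing to be careful about is the index shift and the fact that the empty sum $\sum_{j=1}^{0}$ in $a_1$ vanishes so that $a_1 = x/\hbar$ matches the $n=0$ output $\hbar a_1 = x = \hat x e^{\hat y^r} a_0$ (note $a_0 = 1$ and $e^{\hat y^r}\cdot 1 = 1$). The computation is structurally identical to the strictly monotone proof above, with the product $\prod_{j=0}^{r-1}(1+(nr+j)\hbar)$ replaced by the exponential factor $e^{\hbar^r n^r}$, which is why the classical limit acquires a transcendental form $y = xe^{y^r}$ rather than polynomial.
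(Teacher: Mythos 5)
Your proof is correct and follows the same route as the paper: both establish the term-by-term recursion $\hbar(n+1)a_{n+1}=xe^{\hbar^r n^r}a_n$, recognize the diagonal action of $\hat y$ on monomials to rewrite it in operator form, and sum over $n$. Your version merely makes the index shift and the $n=0$ boundary check explicit, which the paper leaves implicit.
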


\begin{proof}
	 Let $a_n$ be the $n$th summand in Equation~\eqref{eq:waveatlantes}. We have:
	\begin{equation*}
	\hbar (n+1) a_{n+1} = x e^{\hbar^r n^r} a_n 
	\end{equation*}
	In terms of the operators this can be rewritten as
	$$\hbar x \frac{\d}{\d x}a_{n+1} = x e^{(\hbar x \frac{\d}{\d x})^r}a_n.$$
	Therefore,
	$$
	\left[\hbar x \frac{\d}{\d x} - x e^{(\hbar x \frac{\d}{\d x})^r} \right] \Psi(x,\hbar)=0.
	$$
\end{proof}

\begin{remark}
	This case is very interesting since we can say in advance that the logic outlined in Section~\ref{sec:outlineoflogic} fails. Indeed, the dequantization of this quantum curve consides with the dequantization of the quantum curve for the $r$-spin Hurwitz number
	$$
	\hat{y} - \hat{x}^{3/2} \exp\left(\frac{1}{r+1} \sum_{i=0}^r \hat{x}^{-1} \hat{y}^{i} \hat{x} \hat{y}^{r-i}  \right)
	$$
	proved in~\cite{MulaseShadrinSpitz2013}. Even though the spectral curve and the corresponding $r$-ELSV formula for the $r$-spin Hurwitz numbers are still conjectural, there is a very strong evidence for these conjectures to be true~\cite{ShadrinSpitzZvonkine2015}. From these conjectures we can conclude that the dequantization of  $\hat y - \hat x \exp(\hat y^r)$ can not be the spectral curve for the atlantes Hurwitz numbers, suitable for the construction of the topological recursion.
	
	Indeed, even though in genus zero atlantes Hurwitz numbers coincide with the $r$-spin Hurwitz numbers (and hence all data of the spectral curve must be the same), in higher genera this is no longer the case.  
\end{remark}

\subsection{Double Hurwitz numbers}

The partition function of the double Hurwitz numbers is
\begin{align}\label{doublehurt}
	\tau_{HH}({\bf t}, {\bf \tilde{t}})= \cor{ \exp\left(\sum_{i=1}^{\infty} t_i\alpha_i\right)\exp\left(\hbar \cF_2\right)\exp\left(\sum_{i=1}^{\infty} \tilde{t}_i\alpha_{-i}\right) }.
\end{align}
or
\begin{equation*}\label{doubletau}
	\tau_{HH}({\bf t}; {\bf \tilde{t}})=\sum_{\lambda} s_\lambda({\bf t}) s_\lambda({\bf \tilde{t}}) e^{\beta  \pv_\lambda(C_2)}.
\end{equation*}
Then, the basis vectors for this tau-function as a KP tau-function with respect to times $t_k$ is
\begin{equation*}\label{basisHH}
	\Phi_k^{HH}(x)=e^{\frac{\beta}{2}\left((D-\frac{1}{2})^2-(k-\frac{1}{2})^2\right)} e^{\sum_{j=1}^\infty\tilde{t}_j x^j} x^{1-k}.
\end{equation*}
The wave function is given, as usual, by a rescaling of $\Phi_1^{HH}(x)$:
\[
\Psi(x,\hbar):= \Phi_1^{HH}(x)|_{\tilde{t}_{k}\mapsto \hbar^{-1}\delta_{k,1}, k\geq 1}
\]

\begin{proposition} We have:
	\begin{equation}\sum_{k=1}^\infty k \tilde{t}_k \left({\hat{x}}e^{\hat{y}}\right)^k-\hat{y} \Psi(e^x,\hbar) =0,
	\end{equation}	
	where $\hat x=x\cdot$ and $\hat y = \hbar D$.
\end{proposition}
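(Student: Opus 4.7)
The approach I would take mirrors Section~\ref{sec:QuantumKSmonotone}: find a Kac--Schwarz operator that annihilates the first basis vector $\Phi_1^{HH}$ by conjugating the KS operator of the simpler tau-function $\tau_\bullet$, then rewrite the result under the change of variable $x\mapsto e^x$.

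First, reading off the given formula for $\Phi_k^{HH}(x)$ at $k=1$, one sees that
\[
\Phi_1^{HH}(x) \;=\; e^{-\hbar/8}\,G_{HH}\,\Phi_1^\bullet(x), \qquad G_{HH}:=e^{\frac{\hbar}{2}(D-\frac12)^2},
\]
where $D=x\p/\p x$ and $\Phi_1^\bullet(x)=e^{\sum_{j\ge 1}\tilde t_j x^j}$. From \eqref{KSbul} and \eqref{trivac} we know that $a_\bullet=\sum_{k\ge 1} k\tilde t_k x^k - D$ annihilates $\Phi_1^\bullet$. Therefore $a_{HH}:=G_{HH}\,a_\bullet\,G_{HH}^{-1}$ annihilates $\Phi_1^{HH}$ (the scalar factor $e^{-\hbar/8}$ is irrelevant).

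Second, I would compute the conjugation explicitly. Since $D$ commutes with $G_{HH}$, only the terms $x^k$ are affected. Using the standard identity $f(D)\,x=x\,f(D+1)$ with $f(D)=e^{\frac{\hbar}{2}(D-\frac12)^2}$, a one-line calculation gives
\[
G_{HH}\,x\,G_{HH}^{-1} \;=\; x\,\exp\!\Bigl(\tfrac{\hbar}{2}\bigl[(D+\tfrac12)^2-(D-\tfrac12)^2\bigr]\Bigr) \;=\; x\,e^{\hbar D},
\]
hence $G_{HH}\,x^k\,G_{HH}^{-1}=(xe^{\hbar D})^k$ for every $k\ge 1$. Consequently
\[
a_{HH} \;=\; \sum_{k=1}^\infty k\,\tilde t_k\,(xe^{\hbar D})^k \;-\; D, \qquad a_{HH}\,\Phi_1^{HH}(x)=0.
\]

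Third, I would implement the change of variable $x\mapsto e^x$. On functions of the new variable, the operator $D=x\p/\p x$ pulls back to the plain derivative $\p/\p x$, so the relation $a_{HH}\Phi_1^{HH}=0$ becomes, after multiplying by $\hbar$ to normalize, precisely the identity $\bigl(\sum_{k\ge 1} k\tilde t_k(\hat x e^{\hat y})^k-\hat y\bigr)\Psi(e^x,\hbar)=0$ with $\hat x=x\cdot$ and $\hat y=\hbar D$, as claimed. The main obstacle is only bookkeeping: tracking the action of $D$ through the conjugation and through the exponential change of coordinate. Everything else is a direct specialization of the logic already used for the monotone case in Section~\ref{sec:MonotoneHurwitz}.
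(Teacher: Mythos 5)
Your proposal is correct and follows essentially the same route as the paper: the paper also obtains $a_{HH}$ by conjugating $a_\bullet$ with the exponentiated quadratic Casimir, writing $a_{HH}=e^{\frac{\hbar}{2}m_0}a_\bullet e^{-\frac{\hbar}{2}m_0}$ with $m_0=(D-\tfrac12)^2+\tfrac{1}{12}$ (which differs from your $G_{HH}$ only by an irrelevant constant), and likewise computes $G\,x\,G^{-1}=xe^{\hbar D}$ to conclude that $\sum_k k\tilde t_k(xe^{\hbar D})^k-D$ annihilates $\Phi_1^{HH}$ and hence the wave function. The only cosmetic caveat is your final change-of-variable remark: the paper's own proof simply states $A_{HH}\Psi(x,\hbar)=0$ with $\hat y=\hbar x\p_x$ and does not actually pass to the exponential coordinate, so nothing further is needed there.
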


\begin{proof}
	
	To obtain the KS operators for the generating function of double Hurwitz numbers we use the conjugation of the operators (\ref{KSbul}):
	\begin{align}\label{ksdob}
		a_{HH}&=e^{\frac{\beta}{2}m_0}a_\bullet\,e^{-\frac{\beta}{2}m_0}\\ \notag
		&=\sum_{k=1}^\infty k \tilde{t}_k \left(x\exp\left(\beta D\right)\right)^k-D\\
		&=\sum_{k=1}^\infty k \tilde{t}_k e^{\frac{\beta}{2}k(k-1)} x^ k\exp\left(\beta k D\right)-D, \notag
	\end{align}
	where $m_0:=\left(D-\frac{1}{2}\right)^2+\frac{1}{12}$ (the constant $\frac{1}{12}$ is not important for the calculations, but this way we get one of the standard generators of $w_{1+\infty}$, cf. the operator $\widehat{M}_0$ in Example~\ref{example:R1R2}).
	The KS operators \eqref{ksdob} act of the basis vectors as follows:
	\begin{equation*}
		a_{HH}\,\Phi_k^{HH}(z)=(k-1)\Phi_k^{HH}(z).
	\end{equation*}
	
	The operator $a_{HH}$ annihilates $\Phi_1^{HH}(x)$ and, therefore, describes the quantum spectral curve for this model. Namely, we have 
	\begin{equation*}
		A_{HH}\,\Psi(x,\hbar)=0
	\end{equation*}
	where
	\begin{equation}\label{qschh}
	A_{HH}=\sum_{k=1}^\infty k \tilde{t}_k e^{\frac{\hbar}{2}k(k-1)} x^ k\exp\left(\hbar k D\right)-\hbar D,
	\end{equation}
\end{proof}

\begin{remark}
	The wave function in this case is also given by the integral
	\begin{equation*}
		\Psi(x,\hbar)=\frac{e^{-\frac{\hbar}{8}}}{\sqrt{2\pi \hbar}}\int_{-\infty}^\infty dy\,\exp\left(-\frac{y^2}{2\hbar}-\frac{y}{2}+\sum_{k=1}^\infty \frac{\tilde{t}_k}{\hbar}(xe^y)^k\right).
	\end{equation*}
	considered as a formal series in $\tilde{t}_k$.
\end{remark}

\begin{remark}
	Particular specifications of $\tilde{\mathbf{t}}_k$ describe interesting examples of this model, in particular usual simple Hurwitz numbers \cite{AlexandrovTauFunctions2015}, triple Hodge integrals and string amplitude for the resolved conifold \cite{JianZhou2015-2}. Quantum spectral curves for all these examples are given by specifications of the more general expression \eqref{qschh}.
\end{remark}

\subsubsection{A particular example: one-parameter deformation of single Hurwitz numbers} Let us discuss an example of a particular specialization of double Hurwitz numbers given by $\tilde{t}_k={c}^{k-1}$, $k=1,2,\dots$. This gives a one-parameter deformation of single Hurwitz numbers considered in Example~\ref{example:one-parameter}. Up to a simple combinatorial factor, this is equivalent to the Hurwitz theory for the vector of blocks $(B^<_2,\dots,B^<_2,B^<_r)$ and $\nu=(1^{|\mu|})$ (recall that $B^<_r=B^|_r$ by Proposition~\ref{jucysonblocks}).

In this case the wave function is given by 
\begin{equation*}
	\Psi(x,\hbar)= \cor{ 
		\exp\left( \sum_{i=1}^\infty \frac {x^{i }\alpha_i}{i} \right) \exp \left(\hbar\cF_2\right) \exp\left(\sum_{i=1}^\infty  \frac{c^{i-1}\alpha_{-i}}{\hbar} \right)
	}.
\end{equation*}
Equation~\eqref{qschh} reduces to 
\begin{equation*} %\label{qschhh}
A_{HH}=\frac{{\hat{x}}e^{\hat{y}}}{(1-c\,{\hat{x}}e^{\hat{y}})^2}-\hat{y}.
\end{equation*}
Let us multiply this operator by $\frac{(1-c\,{\hat{x}}e^{\hat{y}})^2}{{\hat{x}}e^{\hat{y}}}$. The resulting equation for the wave function
\begin{equation*} %\label{qschhh}
\left(1-\left(e^{-\hat{y}}\hat{x}^{-1}-2c+c^2{\hat{x}}e^{\hat{y}}\right)\hat{y}\right)\Psi(x,\hbar)=0
\end{equation*}
describes the quantum spectral curve for this case.

\begin{remark}
	The restriction of the wave function $\Psi(x,\hbar)$ to $c=0$ is the wave function of the single Hurwitz numbers, and in this special case we recover the quantum spectral curve $e^{-\hat y} \hat x^{-1} \hat y -1$, which is equivalent to the one that was proved in this case in~\cite{Zhou2012}.
\end{remark}

This quantum spectral curve equation suggests that the spectral curve for the one-parameter family of Hurwitz numbers that we consider here should be 
\begin{equation}
ye^{-y} - (1+2cy)x + c^2y e^y x^2 =0,
\end{equation}
which is a deformation of the Lambert curve.

% % % % % % % % % %
% % % % % % % % % %
% % % % % % % % % %
% % % % % % % % % % 
% % % % % % % % % %

%\newpage

\bibliographystyle{abbrv}
\bibliography{HQKPref}

\end{document}